\newtheoremstyle{italiclabel} % Name of the new style
  {\topsep}   % Space above the theorem
  {\topsep}   % Space below the theorem
  {\normalfont}  % Body font (normal text)
  {}          % Indent amount (no indent)
  {\itshape}  % Theorem head font (italic)
  {:}         % Punctuation after theorem head
  {.5em}      % Space after theorem head
  {}          % Theorem head spec (none)
\theoremstyle{italiclabel}
\newtheorem{theorem}{Theorem}
\newtheorem{remark}{Remark}
\newtheorem{proposition}{Proposition}
\def\BibTeX{{\rm B\kern-.05em{\sc i\kern-.025em b}\kern-.08em
    T\kern-.1667em\lower.7ex\hbox{E}\kern-.125emX}}
\begin{document}

\title{A Machine Learning-Based Reference Governor for Nonlinear Systems With Application to Automotive Fuel Cells \\
%\thanks{Identify applicable funding agency here. If none, delete this.}
}
\author{
    Mostafaali Ayubirad\textsuperscript{*}, \textit{Student Member, IEEE}, and Hamid R. Ossareh, \textit{Senior Member, IEEE} \thanks{\textsuperscript{*} Corresponding author: Mostafaali Ayubirad}\thanks{The authors are with the Department of Electrical and Biomedical Engineering, University of Vermont, Burlington, VT 05405 USA (e-mail: {\tt mostafa-ali.ayubirad@uvm.edu;
hamid.ossareh@uvm.edu)}.} \thanks{© 2025 IEEE. Personal use of this material is permitted. Permission from IEEE must be obtained for all other uses.}
}

\maketitle

\begin{abstract}
The prediction-based nonlinear reference governor (PRG) is an add-on algorithm to enforce constraints on pre-stabilized nonlinear systems by modifying, whenever necessary, the reference signal. The implementation of PRG carries a heavy computational burden, as it may require multiple numerical simulations of the plant model at each sample time. To this end, this paper proposes an alternative approach based on machine learning, where we first use a regression neural network (NN) to approximate the input-output map of the PRG from a set of training data. 
During the real-time operation, at each sample time, we use the trained NN to compute a nominal reference command, which may not be constraint admissible due to training errors and limited data. We adopt a novel sensitivity-based approach to minimally adjust the nominal reference while ensuring constraint enforcement. We thus refer to the resulting control strategy as the  modified neural network reference governor (MNN-RG), which is computationally more efficient than the PRG.  
The computational and theoretical properties of MNN-RG are presented. Finally, the effectiveness and limitations of the proposed method are studied by applying it as a load governor for constraint management in automotive fuel cell systems through simulation-based case studies.
\end{abstract}

\begin{IEEEkeywords}
Constraint Management, Reference Governor, Neural Network, Nonlinear system, Fuel cell
\end{IEEEkeywords}

\section{Introduction}
Constraint enforcement in control systems is a critical aspect that has garnered significant attention in recent years. The ability to ensure that system variables remain within prescribed bounds is essential for guaranteeing safe and optimal operation in various industrial processes, robotics, and autonomous systems \cite{del2010automotive, sun2018robust, garone2017reference}. One approach to enforcing constraints in control systems is through Model Predictive Control (MPC) \cite{rawlings2017model}. While MPC offers a comprehensive framework for managing constraints, it requires solving a real-time optimization problem, making it computationally demanding, particularly for systems with fast dynamics and/or nonlinearities. 
As an alternative approach, Lyapunov and barrier functions can be combined and shaped to enforce various types of constraints (state, input, output, etc.) on the system. However, synthesizing appropriate Lyapunov/barrier functions can be challenging, especially for complex systems with multiple constraints, and the resulting control actions may be overly conservative \cite{anand2021safe}.

To address the limitations of MPC and barrier function methods, the Reference Governor (RG) \cite{gilbert1995discrete} strategy was proposed. As its name suggests, the RG enforces the constraints by modifying the reference command of a pre-stabilized closed-loop system. A detailed survey on techniques based on RGs can be found in \cite{garone2017reference,kolmanovsky2014reference}. Much of the RG research has centered around linear systems, where it has been shown that RG is computationally less demanding as compared to MPC and is simpler to design than barrier methods. 

To extend the applicability of the RG approach to nonlinear systems, several modifications and extensions have been proposed in the literature. These approaches generally fall into two categories. The solutions in the first category use linear RGs in a nonlinear setting by exploiting feedback linearization \cite{kalabic2015reference}, embedding a nonlinear model into a family of switched/hybrid linear models \cite{franze2014reconfigurable}, or designing an RG based on an approximated linear model where a disturbance bias is used to compensate for linearization errors \cite{vahidi2006constraint}. While effective in certain applications, these approaches either lack guarantees of strong properties like constraint enforcement (see, e.g.,\cite{vahidi2006constraint}) or face challenges in embedding complex nonlinear systems into the Linear Parameter-Varying (LPV) framework (see, e.g.,\cite{franze2014reconfigurable}). 

The solutions in the second category involve nonlinear RGs that are designed directly based on the nonlinear system models as opposed to  linear ones 
\cite{angeli1999command,bemporad1998reference,gilbert2002nonlinear,gilbert1999set,miller2000control}. In \cite{miller2000control}, the authors leverage the invariance of the level sets of Lyapunov functions to construct an RG scheme \cite{miller2000control}. This approach is computationally efficient but tends to be conservative, potentially resulting in slow response. In \cite{bemporad1998reference}, the author introduces a bisectional search combined with predictive online numerical simulations of the system dynamics to find an optimal constraint-admissible reference. This approach, shown in Fig.~\ref{fig:1}(a), is refereed to as the prediction-based nonlinear RG (PRG). One of the challenges with PRG is that it is computationally intensive due to the repeated online simulations mentioned above. 

As a way to improve the computational efficiency of PRG, in our earlier work in \cite{lim2023reference}, we trained a regression neural network (NN)  offline to approximate the PRG's input-output map.
The trained NN was then used in real-time to compute the reference command as shown in Fig.~\ref{fig:1}(b). We refer to this approach as the Neural Network Reference Governor (NN-RG) and leverage it in our work. It was shown that with a sufficient dataset, the NN-RG can achieve satisfactory performance and is computationally more efficient than the PRG; however, unlike the PRG, it does not provide theoretical guarantees for constraint enforcement 
due to the inherent approximation errors of the NN.

To fill these gaps, we propose a modification of the NN-RG, shown in Fig.~\ref{fig:1}(c), which we refer to as the modified NN-RG (MNN-RG). 
Specifically, at each sample time, we compute the output of the NN. We then predict the output trajectory of the nonlinear system driven by this nominal command. In addition, we compute a “sensitivity” function, which describes the sensitivity of the output to changes in the reference. With these two, we use a first-order Taylor series with a bounded remainder term, and formulate a novel RG, which is robust to the resulting linearization errors. 
In other words, in the proposed approach, the output of the NN is modified to eliminate any potential constraint violations caused by the NN training errors and/or data collection issues. In this paper, we first introduce the MNN-RG, and then show its desirable properties of constraint satisfaction and closed-loop stability. We then present practical and computationally efficient methods for bounding the Taylor remainder term discussed above.

\begin{figure}
    \centering
   \begin{subfigure}{0.4\textwidth}
        \centering
        \includegraphics[width=\textwidth, height=0.23\textwidth, keepaspectratio=false]{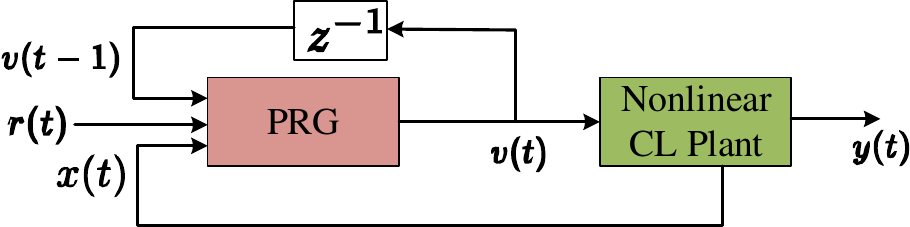}
        %\captionsetup{skip=1pt} % Reduce space between subfigure and caption locally
        \caption{PRG}
    \end{subfigure}
    \begin{subfigure}{0.45\textwidth}
        \centering
        \vspace{0.3em} % Adjusts vertical space between subfigures
        \includegraphics[width=\textwidth, height=0.23\textwidth, keepaspectratio=false]{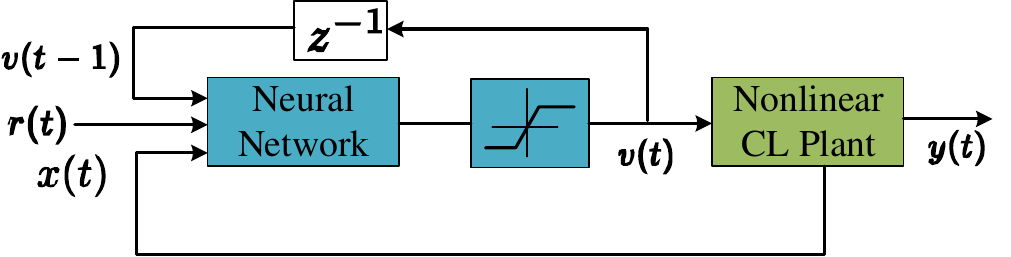}
        %\captionsetup{skip=1pt} % Reduce space between subfigure and caption locally
        \caption{NN-RG}
    \end{subfigure} 
    \begin{subfigure}{0.49\textwidth}
        \centering
        \vspace{0.3em} % Adjusts vertical space between subfigures
        \includegraphics[width=\textwidth, height=0.25\textwidth, keepaspectratio=false]{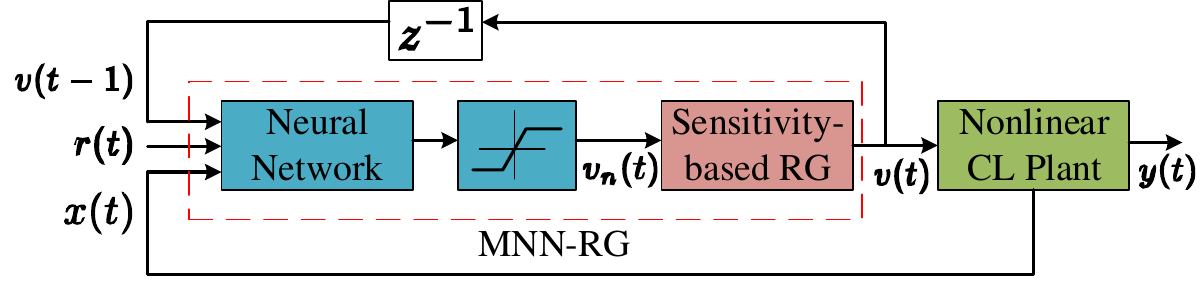}
        %\captionsetup{skip=1pt} % Reduce space between subfigure and caption locally
        \caption{MNN-RG}
    \end{subfigure} 
    \caption{Reference governor controller schemes. The signals are as follows: $y(t)$ is the constrained output, $r(t)$ is the desired reference, $v(t)$ is the modified reference command, $v_{n}(t)$ is the nominal reference, and $x(t)$ is the system state.}
    \label{fig:1}
\end{figure}

Both the NN-RG and MNN-RG require an explicit model, i.e., a mathematical representation of the plant dynamics, as well as knowledge of the constraints, since their training relies on data from PRG, which necessitates this information. There are other learning-based RG approaches in the literature, which are referred to as Learning Reference Governors (LRGs), which learn from data, such as a black-box model or actual system experimentation, eliminating the need for explicit model knowledge
\cite{ikeya2022learning,ong2004stability, liu2021safe,li2019dynamics,lanchares2019reference}. Under some assumptions, these LRGs can  theoretically guarantee constraint satisfaction \cite{liu2021safe}, learn system constraints \cite{li2019dynamics}, and distinguish safe pairs of states and reference commands from the unsafe ones \cite{lanchares2019reference}. However, LRGs require an online training phase during system operation, which could be at times a lengthy process and may lead to constraint violations during training. The NN-RG and MNN-RG do not suffer from these limitations.

In the latter half of this paper, the proposed approach is applied as a load governor to an automotive fuel cell (FC) to prevent oxygen starvation and avoid compressor surge and choke regions. This is not only an illustration of the proposed approach but also serves as a second contribution of this paper. 

Proton exchange membrane fuel cell (PEMFC) systems, which produce electricity through electrochemical reactions, have garnered attention for automotive applications due to their high efficiency and zero emissions. The problem of FC control has been extensively studied, interested readers can refer to \cite{larminie2003fuel,pukrushpan2002modeling,goshtasbi2020degradation,ayubirad2024model}. Due to limited onboard computational resources in automotive vehicles, RGs have found numerous applications for constraint enforcement of automotive FC systems (see, e.g., \cite{ayubirad2023simultaneous, bacher2022efficiency, bacher2023hierarchical,sun2005load,vahidi2006constraint}). Specifically, in \cite{sun2005load}, the PRG was utilized to design a robust RG for the FC's oxygen starvation constraint. In \cite{vahidi2006constraint}, a fast linear RG with low computational requirements was employed to enforce the compressor surge, choke, and oxygen starvation constraints. Despite its efficiency, the linear RG failed to completely eliminate constraint violations and 
small oscillations around the constraint boundary
in the response of the nonlinear FC model. In our paper, we demonstrate the use of the MNN-RG as a load governor in FCs, highlighting its low computational burden compared to the PRG, while ensuring no constraint violations. We also study the tuning and various nuances of applying MNN-RG to the FC constraint management problem.

In summary, this paper contributes to both the control theory and automotive FC literature as follows:

\textbf{The original contributions to control literature:}
The MNN-RG is proposed with the potential to require  fewer training samples  for offline training than the NN-RG, guarantee constraint satisfaction by modifying the NN output through a sensitivity function, and reduce online execution time  compared to the PRG. Theoretical guarantees 
including constraint satisfaction, stability, and convergence are provided. Finally, methods for tuning the MNN-RG in practice are presented.

 \textbf{The original contributions to FC literature:}
A computationally efficient load governor based on MNN-RG is developed to enforce constraints on compressor surge, choke, and oxygen starvation for FC systems. Additionally, methods for generating the necessary dataset to train the NN used in the MNN-RG-based load governor are provided.

The paper is organized as follows. Section~\ref{Preliminaries} contains a review of nonlinear reference governors, including the NN-RG and PRG. The novel MNN-RG is proposed in Section~\ref{Modified neural network reference governor (MN-RG)}. Section~\ref{Implementation of the modified NNRG on the FC air-path system} presents the MNN-RG application to 
FCs. Finally, conclusions are drawn in Section~\ref{conclusions}.

\section{Preliminaries}
\label{Preliminaries}
In this section, we first present the problem setup, followed by a review of the PRG and NN-RG, as these are leveraged in our proposed approach.

\subsection{Problem setup}
\label{Problem setup}
Consider Fig.~\ref{fig:1}, in which the ``Nonlinear CL Plant'' represents a general closed-loop nonlinear system described by
\begin{flalign}
        x(t+1)=f(x(t),v(t))
\label{eqn:1}
\end{flalign}
and subjected to constraints,
\begin{flalign}
y(t)=h(x(t),v(t))\le 0
\label{eqn:2}
\end{flalign}
where $x(t)\in {\mathbb{R}}^{n_x}$ is the state vector, $v(t)\in \mathbb{R}$ is the modified reference command, and the constraint output $y(t)$ is a scalar (this assumption will be relaxed later in the paper when MNN-RG for multi-output systems is presented).
We make the following assumptions about system~\eqref{eqn:1}--\eqref{eqn:2}:
\begin{enumerate}[label=(\alph*)]

\item The input satisfies \( v(t) \in \mathcal{V} \), where \( \mathcal{V} \) is compact. Since the input is bounded, we assume that the state is also bounded, i.e., \( x(t) \in \mathcal{X} \), where \( \mathcal{X} \) is compact.

\item The function \( f : \mathbb{R}^{n_x} \times \mathbb{R} \to \mathbb{R}^{n_x} \), and the function \( h : \mathbb{R}^{n_x} \times \mathbb{R} \to \mathbb{R} \) are twice continuously differentiable on \( \mathcal{X} \times \mathcal{V} \).
\item For a constant input \( v \), the equilibrium manifolds \( {\bar{x}_v = \lim_{t \rightarrow \infty} x(t) }\) and \( \bar{y}_v = \lim_{t \rightarrow \infty} y(t) \), which depend only on \( v \), exist and are uniquely-defined.
\item For all $\lambda > 0$, there exists $\alpha(\lambda) > 0$, such that 
${\| x(0) - \bar{x}_v \| \leq \alpha(\lambda)}$ implies $\| x(t) - \bar{x}_v \| \leq \lambda$ 
for all $t \geq 0$ and all $v \in \mathcal{V}$.
\end{enumerate}
Assumption (a) is nonrestrictive since in practice the references and state variables are bounded. Assumption (b) ensures the existence of sensitivity functions and a well-defined, bounded remainder term in the first-order Taylor series, as will be discussed later in Section~\ref{Modified neural network reference governor (MN-RG)}. Assumption (c) establishes $\bar{x}_v$ and $\bar{y}_v$ as asymptotic solutions for the system \eqref{eqn:1}--\eqref{eqn:2} under constant input $v$. Lastly, assumption (d) guarantees uniform stability of the system across all constant inputs $v$. Assumptions (a)--(d) are required for the proof of finite constraint horizon property in \cite{bemporad1998reference}, which we leverage later in our work.

The Maximal Output Admissible Set (MOAS), denoted by \( O_{\infty} \), is the set of all initial state $x_0$ and constant inputs $v_0$, which satisfies the constraints for all future times, i.e,
\begin{align}
    O_{\infty} = \left\{ (x_0, v_0) : x(0) = x_0, \ v(j) = v_0, \right. & \nonumber \\
    \left. \Rightarrow y(j|x_0, v_0) \leq 0, \ j \in \mathbb{Z}_+ \right\}
    \label{eqn:3}
\end{align}
where ${y}(j|x_0,v_0)$ is the  output trajectory given $x_0$ and $v_0$, and $\mathbb{Z}_+$ denotes the set of all nonnegative integers. The set \( O_{\infty} \) is compact. Specifically, it is closed due to the continuity of \( f \) and \( h \) (Assumption (b)), and bounded in \( \mathcal{X} \times \mathcal{V} \) due to the compactness of \( \mathcal{X} \) and \( \mathcal{V} \) (Assumption (a)). According to the finite constraint horizon property in \cite{bemporad1998reference}, given the assumptions (a)–(d) and the compactness of \( O_{\infty} \), with a tightened constraint on the output at steady-state, i.e., \( \bar{y}_v \leq -\varepsilon \), where \( \varepsilon > 0 \), there exists a \( j^* \) such that \( y(j) \leq 0 \) for \( j = 0 \) to \( j^* \) implies \( y(j) \leq 0 \) for all \( j > j^* \).
This implies that a finitely determined inner approximation of \( O_{\infty} \) can be constructed as
\begin{flalign}
    \begin{split}
        \mathrm{\Omega }=\left\{\left(x_0,v_0\right)\ :\ {x}(0)=x_0,\ \ v(j)=v_0,\ \right. &\\
        \Rightarrow  \overline{y}_v \le -\varepsilon \ ,\ {y}(j|x_0,v_0) \le 0\ ,\ j=0,\dots ,j^*\left.\right\}
    \end{split}
    \label{eqn:4}
\end{flalign}
where $\Omega$ can be made arbitrary close to \( O_{\infty} \) by decreasing $\varepsilon $ at the expense of increasing the complexity of $\Omega$.

The implicit solution for $j^*\ $can be obtained using the level set of a Lyapunov function \cite{garone2017reference}. However, in this paper we will use the approach in \cite{sun2005load}, where $j^*$ is chosen 2 to 5 times larger than a system's time constant. Therefore, the $j^*$ we are considering here is not necessarily the smallest such $j^*$.

\subsection{Review of prediction-based nonlinear reference governor (PRG)}
\label{Review of prediction-based nonlinear reference governor}
Consider the block diagram in Fig. \ref{fig:1}(a), where the desired setpoint $r(t)$ satisfies $r(t)\in \mathcal{V}$, an assumption that we will continue to make for the rest of this paper.
To prevent constraint violation, at each timestep \( t \), the PRG selects the optimal control input \( v(t) \) such that the pair \((x(t), v(t)) \in \Omega\) and \( v(t) \) is as close as possible to \( r(t) \). More formally, the PRG calculates the $v(t)$ by implementing the following dynamic equation: 
\begin{flalign}
v(t)=v(t-1)+\kappa (t)(r(t)-v(t-1))
\label{eqn:5}
\end{flalign}
where $\kappa(t)\in[0,1]$ is obtained by solving the following optimization problem:
\begin{flalign} 
    \begin{aligned}
        \kappa (t)={\mathop{\max }\limits_{\kappa \in [0,1]}} {\rm \; }\kappa\\
        \text{s.t.} \quad &v=v(t-1)+\kappa \left(r(t)-v(t-1)\right)\\ 
        &\hat{x}(j+1)=f(\hat{x}(j),v) \\
        &h(\hat{x}(j),v)\le 0,{\rm \; }j=0,...,j^{*}\\
        &h(\bar{x}_{v} ,v)\le -\varepsilon
    \end{aligned}
\label{eqn:6}
\end{flalign}
where \(\hat{x}(j) = \hat{x}(t+j|t)\) represents the predicted state trajectory propagating from \(\hat{x}(0) = x(t)\),
where $x(t)$ is the full-state measurement, which we assume to be available.
For simplicity, we will use the notation \(\hat{x}(j)\) throughout the rest of the paper to refer to the \(j\)-step ahead prediction, without explicitly referencing \(t\). Note that the predicted states $\hat{x}(j)$ within the RG differ from the actual closed-loop system states in \eqref{eqn:1}, as the RG computes its predictions at each timestep assuming a constant input $v$ over the prediction horizon, while in the actual system, $v$ may change at each timestep.

The optimization in \eqref{eqn:6} is generally nonconvex and can be solved by bisectional or grid search methods.  For the sake of clarity and completeness, the search method is summarized in Algorithm 1, where the PRG is implemented through a bisection search to find a near-optimal $\kappa$. In Algorithm 1, $L$ controls the number of iterations the algorithm performs to find a near-optimal value for $\kappa$. The bisectional search may require multiple simulations of the nonlinear model (step 6 of Algorithm 1) at each timestep, which is why the computational cost of the PRG algorithm is high.

The high computational burden of the PRG may prohibit its the real-time implementation. It has been suggested that replacing the PRG with a nonlinear function approximator, such as a Neural Network (NN), might make it suitable for online implementation (as shown in Fig.~\ref{fig:1}(b)) \cite{sun2005load}. However, the limited availability of data for training the NN, along with the inevitable approximation errors of the NN, constitutes a problem that we will address in this paper.
\begin{algorithm}[t]
\caption{The PRG algorithm to compute \( \kappa \)}
\label{alg:kappa_opt}
\begin{algorithmic}[1]
\STATE \textbf{Input:} $x(t)$, $v(t-1)$, $r(t)$, $L$, $j^*$, $\varepsilon$
\STATE \textbf{Output:} $\kappa_{\text{opt}}$
\STATE Initialize $\underline{\kappa} \gets 0$, $\overline{\kappa} \gets 1$, $\kappa \gets 1$, $\kappa_{\text{opt}} \gets 0$
\FOR{$i = 1$ to $L$}
    \STATE $v \gets v(t-1) + \kappa(r(t) - v(t-1))$
    \STATE Compute $\hat{y}(j)$, $j = 0, \ldots, j^*$, and $\bar{y}_v$ for $(x(t), v)$
    \IF{$\max(\hat{y}(j)) \leq 0$ and $\bar{y}_v \leq -\varepsilon$}
        \STATE $\kappa_{\text{opt}} \gets \kappa$
        \IF{$\kappa = 1$}
            \STATE \textbf{break}
        \ELSE
            \STATE $\underline{\kappa} \gets \kappa$, $\kappa \gets \left(\underline{\kappa} + \overline{\kappa}\right) / 2$
        \ENDIF
    \ELSE
        \STATE $\overline{\kappa} \gets \kappa$, $\kappa \gets \left(\underline{\kappa} + \overline{\kappa}\right) / 2$
    \ENDIF
\ENDFOR
\end{algorithmic}\label{alg:PRG}
\end{algorithm}

\subsection{Review of Neural Network Reference Governor (NN-RG)}
\label{Neural network reference governor (NN-RG)}
NNs are universal function approximators that are commonly used, which motivates their use in approximating the PRG function. The existing works (e.g., \cite{lim2023reference}) implicitly assume that the PRG can be approximated by an NN with reasonable accuracy.
The main idea behind the NN-RG is to replace the PRG described previously with a well-trained NN, as shown in Fig.~\ref{fig:1}(b), thus bypassing the need to simulate the nonlinear model multiple times in real-time, leading to a significant reduction in computation time. The NN is trained using numerous observations of input data $[x(t),v(t-1),r(t)]^\top$ and the corresponding output $v(t)$. {The training dataset is generated by executing the PRG on the nonlinear system under various scenarios, thus helping to create an NN that generalizes well. Since collecting the training dataset requires implementing the PRG, we assume one of two scenarios in this work: (1) a PRG with satisfactory performance can be run on real hardware, which would require a fast processor and involve a non-trivial implementation effort; or (2) a simulator is available, in which case execution time is less of a concern. Once trained, the NN replaces the PRG during real-time operation. It is evaluated at each timestep to generate the reference signal for the closed-loop system, enabling deployment on an embedded processor that is less capable  and more cost-effective than what would be required to run the PRG directly.}

In this paper, we slightly enhance the NN-RG described in \cite{lim2023reference} by making sure that $v(t)$ remains  between \(v(t-1)\) and  \(r(t)\), similar to PRG. Specifically, we introduce a scaling factor \(\kappa_{NN}(t)\), computed as:
\begin{flalign}
    \kappa_{NN} (t)=\max \left(\min \left(\frac{v(t)-v(t-1)}{r(t)-v (t-1)} ,1\right),0\right)
\label{eqn:7}
\end{flalign}
The scaling factor is then used to adjust the NN output as \(v(t) = v(t - 1) + \kappa_{NN}(t)(r(t) - v(t - 1))\) before it is applied to the system.  The above logic is represented by the dynamic saturation block in Fig. \ref{fig:1}b. Note that ideally, \(\kappa_{NN}(t)\) would equal \(\kappa(t)\) from the PRG in Eq. \eqref{eqn:6}; however, due to inherent approximation errors of NNs, this is not always the case.

While the concept of replacing the PRG with an NN is intriguing, it is crucial to note that this approach may result in constraint violations due to approximation errors in the NN. The MNN-RG approach proposed here overcomes this limitation.

\section{Modified neural network reference governor (MNN-RG)}
\label{Modified neural network reference governor (MN-RG)}
In this section, we provide the MNN-RG formulation and its theoretical properties (Section~\ref{MNN-RG design}), discuss a tuning process for the MNN-RG that is practical (Section~\ref{MNN-RG tuning}), and outline a complete design process (Section~\ref{MNN-RG design process}).

\subsection{MNN-RG formulation and theoretical properties}
\label{MNN-RG design}
As mentioned previously, the inherent approximation error in NN impacts the tracking and constraint satisfaction properties of NN-RG. This motivates us to pursue modifications of the NN output through MNN-RG. Interestingly, while the existing works on NN-RG make the implicit assumption that the PRG can be approximated by an NN with high accuracy, this assumption is no longer needed for MNN-RG. As we show later in Section~\ref{Implementation of the MN-RG}, even a with poorly generalized NN, the MNN-RG can enforce constraints and result in good performance.

As shown in Fig.~\ref{fig:1}(c), the high-level idea of the MNN-RG involves first computing \( v_n \), which is the output of a trained NN, saturated between $v(t-1)$ and $r(t)$, similar to NN-RG.  The MNN-RG then solves \eqref{eqn:1}--\eqref{eqn:2} using \( v_n \) as a constant input to obtain the ``nominal predicted output''. A sensitivity function is then employed to compute a truncated Taylor series expansion of the output prediction around the nominal predicted output trajectory, using the remainder term to bound the truncation error. Finally, we run an algorithm similar to a linear RG, but robustified to account for this truncation error, ensuring constraint enforcement.
Below, we elaborate on these ideas and discuss how the remainder term can be bounded in practice.

To begin, note that, at each time instant $t$, the prediction of the output starting from the initial condition $x(t)$ is obtained by solving the following difference equation:
\begin{flalign}
    \begin{split}
        \hat{x}_{n}(j+1) &= f(\hat{x}_{n}(j), v_{n}) \\ 
        \hat{y}_{n}(j) &= h(\hat{x}_{n}(j), v_{n}) \\
        \hat{x}_{n}(0) &= x(t)
    \end{split}
\label{eqn:8}
\end{flalign}
where $j = 0, \ldots, j^*$, and the hat notation denotes predicted values. As before, the input is assumed to be constant over the prediction horizon, i.e.,  $v(j)=v_n$, $j=0,\ldots,j^*$. Since $\hat{y}(j)$ changes depending on the value of constant $v$, we expand $\hat{y}(j)$ in a Taylor series around the nominal prediction trajectory $\hat{y}_{n}(j)$:
\begin{flalign}
    \begin{split}
        \hat{y}(j)=\hat{y}_{n} (j)+S_{y} (j)(v-v_{n} )+R_v(j),\ j=0,...,j^{*}
    \end{split}
\label{eqn:9}
\end{flalign}
where ${S_{y}(j)}={\frac{dy}{dv}}|_{\substack{x=\hat{x}_n(j)\\v=v_n} }$  is the sensitivity function of ${y}$ with respect to $v$, evaluated at $v=v_n$ along the nominal predicted state trajectories $\hat{x}_n(j)$. The term $R_v(j)$ is the remainder term in the Taylor series , which can be written as:
\begin{flalign}
    \begin{split}
       R_v(j) &= S_y^{(2)}\frac{(v - v_n)^2}{2!} + S_y^{(3)} \frac{(v - v_n)^3}{3!} + \dots
    \end{split}
\label{eqn:R_v}
\end{flalign}
where \( S_y^{(i)}(j) = \frac{d^i y}{dv^i} \) evaluated at \( (\hat{x}_n(j), v_n) \), for all \( i \in \mathbb{Z}_{>0} \), with \( \mathbb{Z}_{>0} \) denoting the set of all positive integers. For a nominal input \( v_n \) and nominal state trajectory \( \hat{x}_n(j) \), the remainder term \( R_v(j) \) explicitly depends on \( v \), while its coefficients are evaluated at \( (\hat{x}_n(j), v_n) \). This implies that \( R_v(j) \) is indirectly influenced by the nominal state trajectory and input, with the effect of state dynamics on \( R_v(j) \) captured implicitly rather than explicitly through \( S_y^{(i)}(j) \).
According to the law of total derivatives, the $S_{y}(j)$ in (\ref{eqn:9}) is calculated as:
\begin{flalign}
S_{y} (j) & = \frac{\partial h(x,v)}{\partial x} \Big|_{\substack{x=\hat{x}_n(j)\\v=v_n}} S_{x} (j) + \frac{\partial h(x,v)}{\partial v}  \Big|_{\substack{x=\hat{x}_n(j)\\v=v_n}} 
\label{eqn:10}
\end{flalign}
where $S_x(j)=\frac{ dx}{ dv}|_{\substack{x=\hat{x}_n(j)\\v=v_n} }$ \ is the sensitivity of $x$ with respect to $v$, which can be computed as a solution to the following difference equation [\citenum{khalil2001nonlinear}, page 99]:
\begin{flalign}
    \begin{split}
        S_{x} (j+1)&= \frac{\partial f(x,v)}{\partial x}  \Big|_{\substack{x=\hat{x}_n(j)\\v=v_n}} S_{x} (j) + \frac{\partial f(x,v)}{\partial v} \Big|_{\substack{x=\hat{x}_n(j)\\v=v_n}} \\
        S_{x}(0) & = 0
    \end{split}
    \label{eqn:11}
\end{flalign}
Following Assumptions (a)–(b) in Section \ref{Problem setup}, where \( f \) and \( h \) are assumed to be twice continuously differentiable on the compact set \( \mathcal{X} \times \mathcal{V} \), the function \( \frac{d^2 y}{dv^2} \) satisfies the conditions required by the Extreme Value Theorem. Therefore, there exists an upper bound \( M \in \mathbb{R}_{\geq 0}\) such that ${\sup_{\substack{(x,v) \in (\mathcal{X} \times \mathcal{V})}} \left| \frac{d^2 y}{dv^2} \right| \leq M}
$. Following this, by Taylor’s theorem, the remainder term \( R_v(j) \) in a trajectory-based Taylor expansion of the solution is bounded by \( R_v(j) \leq \frac{M}{2} (v - v_n)^2 \).
For now, we assume that $M$ is known. The  computation of $M$ in the context of MNN-RG will be discussed later. With this upper bound on \( R_v(j) \), the output prediction $\hat{y}(j)$ can be upper bounded as follows:
\begin{flalign}
    \begin{split}
        \hat{y}(j) \leq 
        \hat{y}_{n} (j)+S_{y} (j) (v-v_{n} )+\frac{M}{2}{\left(v-v_n\right)}^2
    \end{split}
\label{eqn:12}
\end{flalign}
for $j=0,...,j^{*}$. Thus, the constraint \(\hat{y}(j) \leq 0\) over the prediction horizon can be replaced by the following tightened condition:
\begin{flalign}
    \begin{split}
        \hat{y}_{n} (j)+S_{y} (j) (v-v_{n} )+\frac{M}{2}{\left(v-v_n\right)}^2\le 0
    \end{split}
\label{eqn:13}
\end{flalign}
Because of \eqref{eqn:12}, if $v$ satisfies \eqref{eqn:13}, then $\hat{y}(j)\leq 0$  as desired, i.e., the constraints are satisfied during the prediction horizon and, as discussed in Section~\ref{Preliminaries}, over all future times.
Thus, the ``Sensitivity-based RG'' in Fig.~\ref{fig:1}(c) is formulated as the following optimization problem:
\begin{flalign} 
    \begin{aligned}
        \kappa (t)={\mathop{\max }\limits_{\kappa \in [0,1]}} {\rm \; }\kappa\\
        \text{s.t.} \quad &v=v(t-1)+\kappa (r(t)-v(t-1))\\ 
        & \hat{y}_{n} (j)+S_{y} (j)(v-v_n )\\
        &+\frac{M}{2} (v-v_{n} )^{2} \le 0, \ j=0,...,j^{*}\\ 
        &\bar{y}_{v} \le -\varepsilon
    \end{aligned}
\label{eqn:14}
\end{flalign}
Upon finding the optimal $\kappa$, $v(t)$ in Fig.~\ref{fig:1}(c) is then found using the standard RG law in \eqref{eqn:5}, which is guaranteed to enforce the constraints, as discussed in the following theorem:
\begin{theorem}
Let $v(-1)$ be an initial command and $x(0)$ an initial state at time $t=0$ such that the pair $(x(0),v(-1))\in \Omega$, where $\Omega$ is as defined in \eqref{eqn:4}. Let the assumptions in Section~\ref{Problem setup} hold and assume that the dynamics in \eqref{eqn:1} are Input-to-State Stable (ISS). Then, the MNN-RG guarantees constraint enforcement on the nonlinear system for all subsequent times $t\geq0$, ensures the ISS stability for the entire system composed of the MNN-RG and the closed-loop system, and for a constant $r$, $v(t)$ converges to a constant.
\end{theorem}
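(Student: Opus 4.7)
The plan is to prove the three claims—constraint enforcement, ISS of the composite system, and convergence of $v(t)$—separately. For constraint enforcement, I would establish the invariant $(x(t), v(t)) \in \Omega$ for all $t \geq 0$ by induction on $t$. The base case follows from the hypothesis $(x(0), v(-1)) \in \Omega$ combined with the inductive step applied once. For the inductive step, assume $(x(t), v(t)) \in \Omega$; by the definition of $\Omega$, applying constant $v(t)$ from $x(t)$ yields $\hat{y}(j) \leq 0$ for all $j \geq 0$, and shifting the time index by one gives $(x(t+1), v(t)) \in \Omega$. When the MNN-RG optimization \eqref{eqn:14} at time $t+1$ returns a feasible $\kappa \in [0,1]$, the quadratic upper bound \eqref{eqn:12} together with the tightened inequalities yields $\hat{y}(j \mid x(t+1), v(t+1)) \leq 0$ for $j = 0, \ldots, j^*$ and $\bar{y}_{v(t+1)} \leq -\varepsilon$; the finite constraint horizon property of \cite{bemporad1998reference} then extends these finitely many inequalities to $(x(t+1), v(t+1)) \in \Omega$. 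Constraint enforcement follows since $(x(t), v(t)) \in \Omega \subseteq O_{\infty}$ implies $y(t) = h(x(t), v(t)) \leq 0$.

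The principal obstacle is recursive feasibility of \eqref{eqn:14} at every step. The difficulty is that the tightened constraint is expanded around the neural-network nominal $v_n$, which may differ substantially from $v(t)$; consequently, even $\kappa = 0$ (giving $v = v(t)$) need not satisfy the tightened inequality although the true prediction $\hat{y}(j \mid x(t+1), v(t))$ is admissible by the induction hypothesis. I would resolve this by adopting the convention that, whenever \eqref{eqn:14} is infeasible over $\kappa \in [0,1]$, the algorithm defaults to $v(t+1) = v(t)$; this fallback preserves $(x(t+1), v(t+1)) = (x(t+1), v(t)) \in \Omega$ and therefore closes the induction. A cleaner alternative I would explore is re-centering the Taylor expansion at $v_n = v(t)$ whenever the NN-based expansion is infeasible, which makes $\kappa = 0$ automatically satisfy the tightened constraint via $\hat{y}_n(j) \leq 0$.

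For ISS of the composite system, the ISS hypothesis on \eqref{eqn:1} gives an estimate of $\|x(t) - \bar{x}_{v(t)}\|$ in terms of the initial deviation and the variation of $v(t)$, while the recursion $v(t) = v(t-1) + \kappa(t)(r(t) - v(t-1))$ keeps $v(t)$ in the convex hull of $v(t-1)$ and $r(t)$; since $\mathcal{V}$ is compact by Assumption (a) and $r(t), v(-1) \in \mathcal{V}$, the sequence $\{v(t)\}$ remains uniformly bounded, and combining these bounds produces an ISS estimate for the composite state $(x(t), v(t-1))$ with respect to $r$. Finally, under constant $r$, the same convex-combination structure with $\kappa(t) \in [0,1]$ makes $\{v(t)\}$ monotone (non-decreasing if $v(-1) \leq r$, non-increasing otherwise) and bounded between $v(-1)$ and $r$, so convergence follows from the monotone convergence theorem.
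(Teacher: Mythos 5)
Your proposal is correct and follows essentially the same route as the paper's proof: the tightened sensitivity constraints imply $(x(t),v(t))\in\Omega$ and hence constraint satisfaction for all future times via the finite constraint horizon property, ISS of the composite system follows by bounding $|v|_\infty$ in terms of $|r|_\infty$ and composing the class-$\mathcal{K}$ gain with that scaling, and convergence under constant $r$ follows from monotonicity and boundedness of the convex-combination update. Your explicit inductive treatment of the infeasible case via the $\kappa=0$ fallback is in fact slightly more careful than the paper's Appendix A, which leaves that step implicit --- it is handled by the $\kappa=0$ default in the explicit solutions \eqref{eqn:kappe_quadratic}--\eqref{eqn:kappe_linear} and acknowledged in Remark~2 rather than argued in the proof itself.
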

\begin{proof}\
See Appendix A.
\end{proof}
We now discuss how explicit solutions to \eqref{eqn:14} can be computed. Notice that the constraints in the optimization problem in \eqref{eqn:14} can be quadratic in $\kappa$, similar to \cite{osorio2022novel}. Specifically, the constraints are of the form:
\begin{flalign}
a_2(j)\kappa^2 + a_1(j)\kappa + a_0(j) \leq 0, \ j=0,...,j^{*}
\label{eqn:15}
\end{flalign}
where
\begin{flalign*}
\begin{split}
    a_2(j) &= \frac{M}{2} (r(t) - v(t-1))^{2} \\
    a_1(j) &= S_{y}(j) (r(t) - v(t-1)) \\
    &\quad + M (r(t) - v(t-1)) (v(t-1) - v_{n}) \\
    a_0(j) &= \hat{y}_{n}(j) + S_{y}(j) (v(t-1) - v_{n})
\end{split} 
\end{flalign*}
  
Note that $a_2(j)$ in (16) is in fact constant; that is $a_2(j)=a_2$. The solution to \eqref{eqn:15} is categorized into two cases based on whether \( a_2 > 0 \) or \( a_2 = 0 \). For \( a_2 > 0 \), let the roots of the \(j\)-th quadratic inequality in \eqref{eqn:15} be denoted by \(\kappa_{L,j}\) and \(\kappa_{U,j}\). If the roots are complex, then \eqref{eqn:15} is infeasible, and we set \(\kappa\) to zero, thereby guaranteeing constraint satisfaction. If the roots are real, the range of acceptable \(\kappa\) will be between the two roots, i.e., \(\kappa \in [\kappa_{L,j}, \kappa_{U,j}]\). Thus, the explicit solution for \(\kappa\) in \eqref{eqn:15} is: 
\begin{flalign}
\kappa =
\begin{cases}
    \kappa_{U} & \text{if all } \kappa_{U,j}, \kappa_{L,j} \text{ are real} \text{ and } \kappa_{L} \leq \kappa_{U} \\
    0 & \text{otherwise}.
\end{cases}
\label{eqn:kappe_quadratic}
\end{flalign}
where 
\[
    \kappa_{U} = \min \left\{ \min_{j} \left(\kappa_{U,j}\right), 1 \right\}, \quad \kappa_{L} = \max \left\{ \max_{j} \left(\kappa_{L,j}\right), 0 \right\}.
\]    
For \( a_2 = 0 \), the explicit solution of \eqref{eqn:15} is:
\begin{flalign}
\kappa =
\begin{cases}
    \kappa_{U} & \text{if } \kappa_{L} \leq \kappa_{U} \text{ and } a_0(j) \leq 0 \text{ for all } j \text{ such that }\\ &a_1(j) = 0, \\
    0 & \text{otherwise}
\end{cases}
\label{eqn:kappe_linear}
\end{flalign}
where 
\begin{align*}
\kappa_U &= \min \left\{ \min_{j \,:\, a_1(j) > 0} \left( -\frac{a_0(j)}{a_1(j)} \right), 1 \right\}, \\  
\kappa_L &= \max \left\{ \max_{j \,:\, a_1(j) < 0} \left( -\frac{a_0(j)}{a_1(j)} \right), 0 \right\}.
\end{align*}
\begin{remark}
\label{rm:twowaysolution}
An alternative approach to computing \( \kappa \) in \eqref{eqn:15} is to perform a bisectional search, similar to Algorithm 1, to determine the largest \( \kappa \) satisfying the inequalities in \eqref{eqn:15}.
The formula in \eqref{eqn:kappe_quadratic}-\eqref{eqn:kappe_linear} provide an exact solution to \eqref{eqn:15}; however, for quadratic inequalities, \eqref{eqn:kappe_quadratic} is only recommended when the number of inequalities is small. Otherwise, the bisectional method is preferred, as it provides an approximate solution to \eqref{eqn:15} but is easy to implement, generally robust, and less prone to numerical instability, making it preferable when dealing with a large number of inequalities.
\end{remark}
\begin{remark}
The MNN-RG optimization problem in \eqref{eqn:14} is not necessarily recursively feasible. However, this does not pose a problem, as for a constant input computed from \eqref{eqn:14} at each timestep, the constraint \(\hat{y}(j) \leq 0\) will be satisfied for all time. Proving recursive feasibility for \eqref{eqn:14} is challenging because \(S_y(j)\), which depends on \(v_n\) of the NN, changes at every timestep. However, our extensive numerical studies suggest that MNN-RG does not suffer from deadlocks due to lack of recursive feasibility.
\end{remark}
\begin{remark} 
We remark on the connection between \eqref{eqn:14} and the standard linear RG and show the latter is equivalent to \eqref{eqn:14} with $M=0$.
Suppose system \eqref{eqn:1} is a pre-compensated discrete-time linear system as follows:
\begin{flalign}
    \begin{split}
        x(t+1)&=A x(t)+Bv(t)\\ 
        y(t)&=Cx(t)+Dv(t)
    \end{split}
\label{eqn:17}
\end{flalign}
By using the linear model in \eqref{eqn:17} and assuming a constant command $v$, the constraint $\hat{y}(j) \le 0$ in the standard linear RG is expressed in the form $H_x(j)x\left(t\right)+H_v(j)v \le 0$, where $x(t)$ is the currently available state, $H_x(j)=CA^{j}$ and$\ H_v(j)=C(I-A^{-1})(I-A^{j})B+D$. This constraint is exactly the MNN-RG constraint in \eqref{eqn:13} with $M=0$: 
\begin{flalign}
\begin{split}
    \underbrace{H_{x} (j)x(t)+H_{v} (j)v_{n} }_{\hat{y}_{n} (j)}+\underbrace{H_{v} (j)}_{S_{y} (j)}\left(v-v_{n} \right) \\=H_{x} (j)x(t)+H_{v} (j)v \le 0, \ j=0,...,j^{*}.
\end{split}
\label{eqn:18}
\end{flalign}
Thus, for a linear system, MNN-RG with $M=0$ is equivalent to the standard linear RG. 
Note that the standard linear RG solves a linear programming problem, unlike the MNN-RG with $M\neq 0$, whose optimization may have quadratic constraints, as shown before.
\end{remark}
\begin{remark} 
For a single-input multi-output system, where \( y(t) \in \mathbb{R}^{n_y} \), \eqref{eqn:13} in MNN-RG can be replaced by the following \( n_y \) conditions:
\begin{flalign}
    \begin{split}
        &\hat{y}_{in}(j) + S_{yi} (j)(v - v_n) + \frac{M_i}{2} (v - v_n)^2 \leq 0,\\
        &i = 1, \dots, n_y, \ j=0,...,j^{*}.
    \end{split}
\label{eqn:19}
\end{flalign}
Here, \(\hat{y}_{in}(j)\) is the \(i\)-th nominal prediction trajectory of the constrained output \( y_i \), $S_{yi} (j)$ is the sensitivity of the \mbox{\(i\)-th} output with respect to \( v \), and \( M_i \ge 0 \) is a constant for the \(i\)-th output such that $\sup_{(x,v) \in (\mathcal{X} \times \mathcal{V})} \left| \frac{d^2 y_i}{dv^2} \right| \leq M_i
$. The resulting multi-output MNN-RG shares the same theoretical properties as the single-output one, as discussed in Theorem~1.
\end{remark}

\subsection{MNN-RG tuning}
\label{MNN-RG tuning}
In the previous section, we explained that the remainder term, $R_v(j)$, can be bounded as \(|R_v(j)| \leq \frac{M}{2} (v - v_n)^2\), where the parameter \(M\) can be calculated analytically as the smallest value satisfying ${\sup_{\substack{(x,v) \in (\mathcal{X} \times \mathcal{V})}} \left| \frac{d^2 y}{dv^2} \right| \leq M}$.
However, there are three challenges associated with analytical calculation of $M$:
\begin{enumerate}
    \item Such calculation requires the evaluation of the second-order sensitivity function of \(y\) with respect to \(v\), which can be quite tedious.
    \item Since such calculation is based on the upper bound of $|R_v(j)|$, the tracking performance of the MNN-RG can be compromised in cases where \(R_v(j) < 0\).
    \item The value of \(M\) may be too large, because it is based on the maximum mismatch between the nonlinear system output and its Taylor's approximation over the entire operating range (and not just in the vicinity of the constraint), which may lead to overly conservative response.
\end{enumerate}
 For these reasons, the analytical calculation of \(M\) for MNN-RG may be impractical and ineffective. In the following, we present two numerical approaches to ``tune'' the MNN-RG: 
 \begin{enumerate}
     \item The first approach, referred to as \(\bar{M}\)-tuning, involves replacing \(M\), which must be calculated analytically, with \(\bar{M}\), which is a calibration parameter determined numerically. In this approach, the upper bound of the remainder term \(\frac{M}{2}(v-v_n)^2\) in \eqref{eqn:14} is replaced with \(\frac{\bar{M}}{2}(v-v_n)^2\), where $\bar{M}$ is found empirically through an iterative process involving multiple hardware experiments or numerical simulations with the MNN-RG in the loop.
    \item The second approach, referred to as \(\bar{R}\)-tuning, involves bounding \(R_v(j)\) directly instead of tuning $M$. In this approach, the upper bound of the remainder term, ${\frac{M}{2}(v-v_n)^2}$ in \eqref{eqn:14}, is replaced with its approximated value referred to as \( \bar{R} \) obtained numerically by conducting a single experiment or simulation. 
\end{enumerate}
As we will show later, the tracking performance of \(\bar{R}\)-tuning is less favorable; however, its redeeming feature is ease of implementation, as it does not require an iterative process unlike \(\bar{M}\)-tuning.
In the following subsections, we provide a detailed explanation of each approach.

\subsubsection{$\bar{M}$-tuning}
\label{M-tuning}
In the $\bar{M}$-tuning method, we replace $M$ in \eqref{eqn:13}, \eqref{eqn:14} with a tunable parameter $\bar{M}$, that is: 
\begin{flalign}
\hat{y}_{n}(j) + S_{y}(j)(v - v_{n}) + \frac{\bar{M}}{2} (v - v_{n})^{2} \le 0, \; j = 0, \ldots, j^{*}.
\label{eqn:23}
\end{flalign}
The goal is to find the smallest value of $\bar{M}$ (to maximize the tracking performance) such that no constraint violations occur. The smallest value of $\bar{M}$ is found iteratively, starting from $\bar{M}=0$. For each $\bar{M}$, the nonlinear system is operated by running an experiment on the hardware or simulating the system if a simulator is available, using the MNN-RG approach. The output trajectory of the system is then checked for constraint violations. If any constraint violations are detected along the trajectory, $\bar{M}$ is increased so that the MNN-RG applies a less aggressive reference to the closed-loop plant. 
If, on the other hand, no constraint violations occur, $\bar{M}$ is decreased until they do or the RG becomes inactive (i.e., $\kappa=1$ in \eqref{eqn:23}). 

The above is summarized in Algorithm \ref{alg:M} for calibrating  $\bar{M}$.
The following proposition shows that, in Algorithm \ref{alg:M}, there exists a sufficiently large \(\bar{M}\) that guarantees the termination of the first while loop and a sufficiently large negative value of \(\bar{M}\) that ensures the termination of the second while loop.
\begin{algorithm}[t!]
\caption{Calibration of $\bar{M}$ in MNN-RG with $\bar{M}$-tuning}
\begin{algorithmic}[1]
\STATE \textbf{Inputs:} Reference signal $r(t)$, where $0 \leq t \leq T$, $\Delta \bar{M}$ (Step size based on the problem)
\STATE \textbf{Output:} $\bar{M}$ in MNN-RG approach with $\bar{M}$-tuning
\STATE Initialize $\bar{M} \leftarrow 0$
\STATE Execute the MNN-RG with $M$ replaced by $\bar{M}$ in (14) to obtain $y(t)$ for $0 \leq t \leq T$, using $r(t)$ as the reference
\IF{$y(t) > 0$ for any $0 \leq t \leq T$}
\WHILE{$y(t) > 0$ for any $0 \leq t \leq T$}
\STATE $\bar{M} \leftarrow \bar{M} + \Delta \bar{M}$
\STATE Execute the MNN-RG with $M$ replaced by $\bar{M}$ in (14) using $r(t)$ to obtain $y(t)$ for $0 \leq t \leq T$
\ENDWHILE
\ELSE
\WHILE{$y(t) \leq 0$ for all $0 \leq t \leq T$}
\STATE $\bar{M} \leftarrow \bar{M} - \Delta \bar{M}$
\STATE Execute the MNN-RG with $M$ replaced by $\bar{M}$ in (14) using $r(t)$ to obtain $y(t)$ and $\kappa(t)$ for $0 \leq t \leq T$
\IF{$y(t) > 0$ for any $0 \leq t \leq T$}
\STATE $\bar{M} \leftarrow \bar{M} + \Delta \bar{M}$
\STATE \textbf{break}
\IF{$\kappa(t) = 1$ for all $0 \leq t \leq T$}
\STATE \textbf{break}
\ENDIF
\ENDIF
\ENDWHILE
\ENDIF
\STATE \textbf{Output:} $\bar{M}$
\end{algorithmic}\label{alg:M}
\end{algorithm}

\begin{proposition}
Assume that $\exists M \in \mathbb{R}_{\geq 0}$ such that \(|R_v(j)| \leq \frac{M}{2} (v - v_n)^2\). Then, for any \(\bar{M} > M\), the left-hand side of \eqref{eqn:23} provides an upper bound on \(\hat{y}(j)\) from \eqref{eqn:9}, and for any \(\bar{M} < -M\), it provides a lower bound on \(\hat{y}(j)\).
\end{proposition}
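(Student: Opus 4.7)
The plan is to prove the proposition by a direct algebraic comparison between the left-hand side of \eqref{eqn:23} and the exact expression for $\hat{y}(j)$ from \eqref{eqn:9}, exploiting the hypothesized bound $|R_v(j)| \leq \frac{M}{2}(v-v_n)^2$. The proof naturally splits into two cases, corresponding to the upper and lower bound claims.

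First, I would subtract $\hat{y}(j)$ from the left-hand side of \eqref{eqn:23}. Using \eqref{eqn:9}, the nominal trajectory term $\hat{y}_n(j)$ and the sensitivity term $S_y(j)(v-v_n)$ cancel, leaving the difference
\begin{equation*}
\Delta(j) \;=\; \frac{\bar M}{2}(v-v_n)^2 \;-\; R_v(j).
\end{equation*}
The entire proposition is then a statement about the sign of $\Delta(j)$ for suitable choices of $\bar{M}$.

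For the upper bound case $\bar{M} > M$, I would apply the hypothesis in the form $R_v(j) \leq |R_v(j)| \leq \tfrac{M}{2}(v-v_n)^2$ to obtain
\begin{equation*}
\Delta(j) \;\geq\; \frac{\bar M - M}{2}(v-v_n)^2 \;\geq\; 0,
\end{equation*}
which yields $\hat{y}(j) \leq \hat{y}_n(j) + S_y(j)(v-v_n) + \tfrac{\bar M}{2}(v-v_n)^2$. For the lower bound case $\bar{M} < -M$, I would use the other side of the absolute value, namely $-R_v(j) \leq |R_v(j)| \leq \tfrac{M}{2}(v-v_n)^2$, which gives
\begin{equation*}
\Delta(j) \;\leq\; \frac{\bar M + M}{2}(v-v_n)^2 \;\leq\; 0,
\end{equation*}
since $\bar M + M < 0$, establishing the lower bound.

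Honestly, I do not expect a substantive obstacle here: once \eqref{eqn:9} is substituted, the claim reduces to the definition of an absolute value bound together with two inequality chains. The only care needed is to track signs correctly in the negative-$\bar M$ case and to note that the argument holds pointwise in $j$, so it applies uniformly over the prediction horizon $j = 0, \ldots, j^{*}$. No assumption about the sign of $v - v_n$ is required because $(v-v_n)^2 \geq 0$.
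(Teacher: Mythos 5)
Your proposal is correct and follows essentially the same route as the paper's proof: unpacking the absolute-value hypothesis into the two-sided bound on $R_v(j)$ and comparing term by term with \eqref{eqn:9}, with your explicit difference $\Delta(j)$ just making the paper's ``substituting and comparing'' step concrete. No gaps.
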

\begin{proof}
The inequality \(|R_v(j)| \leq \frac{M}{2} (v - v_n)^2\) implies \(-\frac{M}{2} (v - v_n)^2 \leq R_v(j) \leq \frac{M}{2} (v - v_n)^2\). Substituting these bounds into \eqref{eqn:9} and comparing it with the left-hand side of \eqref{eqn:23} directly shows that for \(\bar{M} > M\), the left-hand side of \eqref{eqn:23} exceeds \(\hat{y}(j)\), and for \(\bar{M} < -M\), it is less than \(\hat{y}(j)\). 
\end{proof}
Proposition 1 implies that any \( \bar{M} > M \) breaks the first while loop, while any \( \bar{M} < -M \) breaks the second while loop.

\subsubsection{$\bar{R}$-tuning}
\label{bar{R}-tuning}
To describe the \(\bar{R}\)-tuning approach, we first rearrange the terms in (9) as:
\begin{flalign}
    \begin{split}
        R_v(j)=\hat{y}(j)-\hat{y}_{n} (j)-S_{y} (j)(v-v_{n} ).
    \end{split}
\label{eqn:20}
\end{flalign}
The goal is to find an upper bound on $R(j)$ using the above by conducting an experiment. To ensure that the constraints are satisfied during the tuning process, the experiment will be conducted with the PRG in the loop for a reference signal \(r(t)\) over \(0 \leq t \leq T\), where \(T\) represents both the experiment and the reference signal horizon. The upper bound on \(\bar{R}\) is numerically approximated using the collected data as follows:
\begin{flalign}
    \begin{split}
        \bar{R} = \max_{0 \leq t \leq T} \left( \max_{0 \leq j \leq j^*} {R_v}(t+j|t) \right)
    \end{split}
\label{eqn:21}
\end{flalign}
Here, we replace \( R_v(j) \) with the  notation \( R_v(t + j|t) \) to explicitly reference \( t \).
To explain \eqref{eqn:21}, we evaluate ${R}_v(t + j|t)$ over the prediction horizon $j$ (ranging from 0 to $j^*$) for each time step $t$ (ranging from 0 to $T$) and choose the maximum value. 
This leads to Algorithm \ref{alg:Rtuning} to bound the remainder term in \eqref{eqn:9}. 
With \(\bar{R}\) as the approximated upper bound of the remainder term, in MNN-RG with \(\bar{R}\)-tuning, \eqref{eqn:13} and thus the constraint in \eqref{eqn:14} are replaced with
\begin{flalign}
    \begin{split}
        \hat{y}_{n} (j) + S_{y} (j) (v - v_{n}) + \bar{R} \le 0,\ j=0,...,j^{*}.
    \end{split}
\label{eqn:22}
\end{flalign}
If the $\bar{R}$ is positive, it robustifies the MNN-RG by providing a conservative constraint tightening that accounts for the worst-case approximation error. Conversely, if $\bar{R}$ is negative, it allows for potential improvement in the tracking performance of the MNN-RG by relaxing the constraints.
\begin{algorithm}[t!]
\caption{Bounding the Remainder Term with \(\bar{R}\)-Tuning}
\label{alg:Rtuning}
\begin{algorithmic}[1]
\STATE \textbf{Input:} Reference signal \(r(t)\) for \(0 \leq t \leq T\)
\STATE \textbf{Output:} Upper bound of the remainder term, \(\bar{R}\)
\STATE Initialize \(\bar{R} \leftarrow -\infty\) (or the largest negative value)
\STATE Run the system with PRG using \(r(t)\), and collect state variables \(x(t)\) and input variables \(v(t)\) for \(0 \leq t \leq T\)
\FOR{$t = 0$ to $T$}
    \STATE Compute the output of the neural network, $v_n(t)$, with inputs \(x(t)\), \(v(t-1)\), and \(r(t)\)
    \FOR{$j = 0$ to $j^*$}
        \STATE Compute \( \hat{y}_n(j) \) with \eqref{eqn:8}, \( S_y(j) \) with \eqref{eqn:10}, and \( \hat{y}(j) \) with \eqref{eqn:8} where \( v_n \) is replaced with \( v(t) \).
        \STATE Calculate \(R_v(t + j|t)\) using (21)
        \STATE Update the upper bound: \(\bar{R} \leftarrow \max(\bar{R}, R_v(t + j|t))\)
    \ENDFOR
\ENDFOR
\STATE \textbf{Output:} \(\bar{R}\)
\end{algorithmic}
\end{algorithm}

\subsection{The entire MNN-RG design process}
\label{MNN-RG design process}
In Fig.~\ref{fig:2}, a flowchart for the training and performance evaluation of the proposed MNN-RG with $\bar{M}$-tuning is shown. The trend is similar for the MNN-RG with $\bar{R}$-tuning. The flowchart can be broken down into the following main steps:
\vspace{0.3em}

\noindent \textbf{Step 1: Data collection and pre-processing:}

\noindent To collect input and output features for training a regression NN, the PRG strategy is implemented on the nonlinear system for a given reference signal $r(t)$ by conducting experiments on the hardware or simulating the system if a simulator is available. To enhance the prediction accuracy of the NN model, the initial dataset is normalized before it is used for NN training \cite{geron2022hands}.

\noindent \textbf{Step 2: Optimal neural network selection:}

\noindent After data collection, several NNs are trained in multiple trials, each with distinct configurations, including variations in the number of hidden layers and the number of neurons in those layers. In these trials, the initial values for biases, weights, and data splitting are randomized. After evaluating the results from all trials, the trial that yields the highest performance across various metrics, such as root mean square error (RMSE), R-square, or others, is selected for use in the  MNN-RG algorithm. For more details on deriving suitable NN topologies for approximating the PRG, please refer to \cite{lim2023reference}.

\vspace{0.3em}

\noindent \textbf{Step 3: MNN-RG tuning and performance check:} 

\noindent We select a reference signal that is different from the one used in NN training  (including the validation and test datasets). For this reference, parameter $\bar{M}$ is adjusted using Algorithm \ref{alg:M}. Following this, the performance of the MNN-RG  approach with $\bar{M}$-tuning is assessed by comparing its output with that of the PRG using metrics such as R-square, RMSE, or others. If the MNN-RG fails to achieve satisfactory reference tracking while maintaining constraint enforcement, it indicates that the NN lacks generalization to unseen data. The poor generalization can be attributed to either an insufficient amount of training data for the NN or challenges such as  presence of noise and outliers in the collected data, and inherent model limitations. By analyzing the NN's performance during the training phase, we can determine whether additional data collection is necessary or if a more sophisticated machine learning model and advanced data preprocessing techniques should be employed to achieve the desired tracking performance. Achieving any arbitrary desired tracking performance with the the MNN-RG approach is impractical due to limitations such as finite training data, model constraints, and computational resources. So, it is important to  adapt the threshold for evaluating the performance of the MNN-RG tracking to a practical or achievable level. 
\begin{remark}
The command profile $r$ used for dataset collection during NN training should differ from the reference signal employed for the optimal selection of $\bar{M}$ in Algorithm \ref{alg:M}. This is because the requirement for robustification or tracking performance of the MNN-RG using the $\bar{M}$ term is attributed to the NN's limited generalization ability to unseen data. Utilizing the same command profile to tune $\bar{M}$ defeats this purpose. Tuning the parameter $\bar{M}$ based on a closed-loop run, where the system faces diverse operating conditions and (small) constraint violations, increases the likelihood that an MNN-RG with the tuned $\bar{M}$ will exhibit good tracking performance without constraint violations in new scenarios.
\end{remark}
\begin{figure}
\centering
\includegraphics[width=0.5\textwidth]{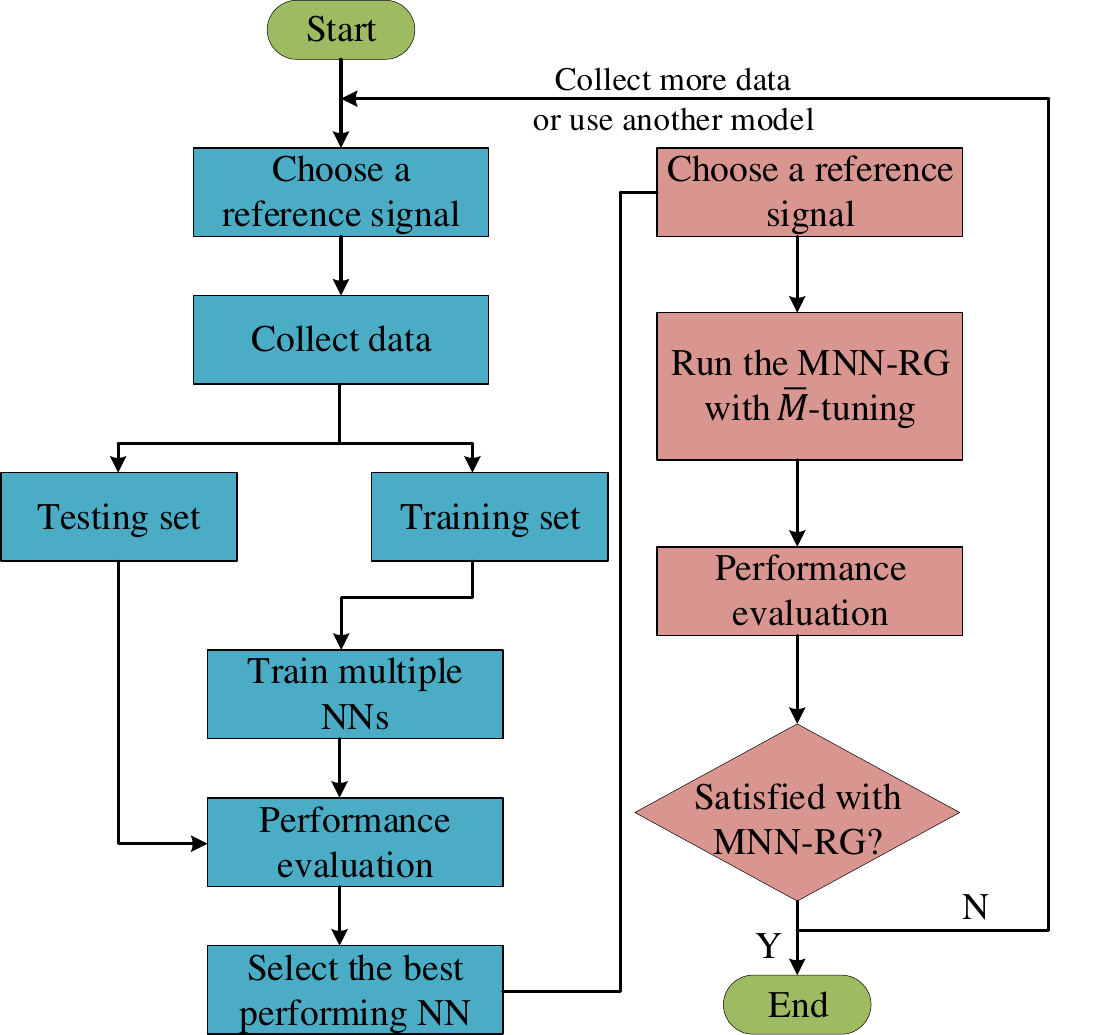}
\caption{Flowchart for the offline training and performance evaluation of the MNN-RG approach with $\bar{M}$-tuning. The blue boxes correspond to the neural network training and evaluation steps. The red boxes correspond to MNN-RG tuning and evaluation steps.}
\label{fig:2}
\end{figure}

\section{MNN-RG for constraint management of the Fuel Cell (FC) system}
\label{Implementation of the modified NNRG on the FC air-path system}
In this section, the proposed MNN-RG approach is utilized for  constraint management of the FC air-path subsystem. The plant model used in this study is a nonlinear, spatially averaged, reduced third-order model of a FC stack with its auxiliaries \cite{talj2009experimental}. This low-order model is derived from a nine-state full-order system developed in \cite{pukrushpan2004control}, and is suitable for studying the air-path side of the FC system. Below, we first introduce the closed-loop FC airpath system, i.e., system \eqref{eqn:1}, and show that without a reference governor, system constraints are violated, which motivates the need for the MNN-RG. In Subsection~\ref{NN-RG for load control}, we show that an NN-RG strategy without modification cannot enforce the constraints. Therefore, in Subsection~\ref{Implementation of the MN-RG}, we tune and apply the MNN-RG to the closed-loop FC system and show that it can successfully enforce the constraints. In Section~\ref{Robustness}, we conduct an additional test using a higher-fidelity, nine-state full-order model to evaluate the robustness of the proposed control strategy against model mismatch, noise, and disturbances. In Subsection~\ref{Drive-cycle simulation}, we present simulation results on a realistic drive cycle, and present an analysis of MNN-RG online execution time.

\subsection{Air-path system model for FC}
\label{Air-path system model for FC}
The dynamics of the FC air-path system can be described by three dynamic states $x={\left[p_{ca},{\omega }_{cp},p_{sm}\right]}^\top$, consisting of the cathode air pressure, the angular velocity of the compressor motor, and the pressure of air in the supply manifold, respectively. The state-space representation of a reduced third-order FC model is presented in \cite{talj2009experimental}. 

Control of the air-path system aims to maximize net power while preventing oxygen starvation and avoiding surge and choke violations across a range of load conditions \cite{vahidi2006constraint}. In the air-path system, the control input is the compressor motor voltage, $v_{cm}$, which is computed by the controller as a combination of static feedforward (SFF) control and proportional-plus-integral (PI) feedback control, as follows:
\begin{flalign}
    \begin{split}
        v_{cm} &=\underbrace{{\rm g_{1}I}_{st} {\rm +g_{2}}}_{SFF}{\rm +}\underbrace{\left(k_{p}+\frac{k_{I}}{s} \right)\left(W_{cp}^{ref} -W_{cp} \right)}_{PI}\\ W_{cp}^{ref} &=n_{cell} M_{O_{2} } I_{st} \lambda _{O_{2} ,des} \frac{1+\omega _{atm} }{4Fx_{O_{2} ,atm} }
    \end{split}
\label{eqn:24}
\end{flalign}
where $I_{st}$ [A] is the current drawn from the FC stack, \mbox{$W_{cp}$ [kg s${}^{-1}$]} denotes the compressor mass flow rate, $n_{cell}$ is the number of cells in the stack, $F=96485$ [C mol${}^{-1}$] is the Faraday number, $M_{O_2}=32$ [g mol${}^{-1}$] is the mole mass of oxygen, \mbox{${\lambda }_{O_2,des}=2$ [-]} is the desired value of oxygen excess ratio to maximize the net power of the stack, \mbox{${\omega }_{atm}=0.0098$ [-]} is the humidity ratio of the \mbox{atmospheric} air, \mbox{$x_{O_2,atm}=0.233$ [-]} is the \mbox{atmospheric} oxygen mass fraction. The PI and static feedforward gains are provided in Appendix A. The details of the feedforward and feedback controller design for the air-path system are given in \cite{pukrushpan2004control}.

The constrained outputs of the air-path subsystem are ${y={\left[{\lambda }_{O_2},W_{cp},\ p_{sm}\right]}^\top}$, which we use to define the surge, choke, and oxygen starvation constraints respectively as:
\begin{flalign}
     \begin{split}
        \frac{p_{sm} }{p_{atm} }& \le 50W_{cp} -0.1\\ 
        \frac{p_{sm} }{p_{atm} } &\ge 15.27W_{cp} +0.6\\ 
        \lambda _{O_2} &\ge 1.9
    \end{split}
\label{eqn:25}
\end{flalign}
where
\begin{flalign*}
    \begin{split}
        \lambda _{O_{2} } &=\frac{W_{O_{2} ,ca,in} }{W_{O_{2} ,ca,rct} }\\ 
    \end{split}
\end{flalign*}
\begin{flalign*}
    \begin{split}
        W_{O_{2} ,ca,in} &=k_{ca,in} \frac{x_{O_{2} ,atm} }{1+\omega _{atm} } \left(p_{sm} -p_{ca} \right)\\ 
        W_{O_{2} ,ca,rct} &=\frac{n_{cell} M_{O_{2} } }{4F} I_{st}
    \end{split}
\end{flalign*}
in which $W_{O_2,ca,in}$ is the oxygen mass flow rate into the cathode, $W_{O_2,ca,rct}$ is the mass flow rate of the oxygen reacted in the cathode, $p_{atm}$ is the atmospheric pressure, and $k_{ca,in}=3.62\times 10^{-6} \, [\text{kg}\,\text{Pa}\,\text{s}^{-1}]$ is the constant for the cathode inlet nozzle~\cite{vahidi2006constraint}. 

Fig.~\ref{fig:3} illustrates the OER and the compressor trajectory responses of the closed-loop model (without any constraint management strategy) during a series of step-up and step-down changes in stack current $I_{st}$. 
The figure reveals that the surge constraint may be violated during current step-down, while the OER and choke constraints may be violated during current step-up. Since fast changes in current result in constraint violations, we employ a current governor (also known as a load governor) strategy to enforce constraints on the OER and compressor.
\begin{figure}
\centering
\includegraphics[width=0.5\textwidth]{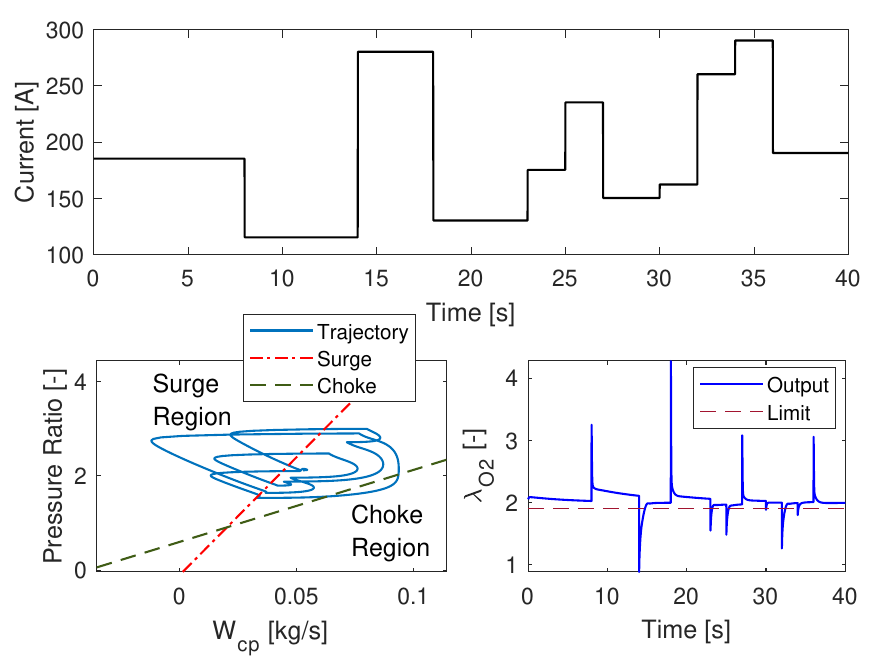}
\caption{Simulation response without reference governor: Current, OER, and compressor state trajectories for step changes in current demand. The dashed lines indicate system constraints. The surge region lies to the left of the dashed red line (surge boundary) and the choke region lies to the right of the dashed green line (the choke boundary). The region below $\lambda _{O_{2} } = 1.9$ on the OER plot indicates oxygen starvation and should be avoided for durable operation of  FC.}
\label{fig:3}
\end{figure}
\begin{figure}[b!]
\centering
\includegraphics[width=0.5\textwidth]{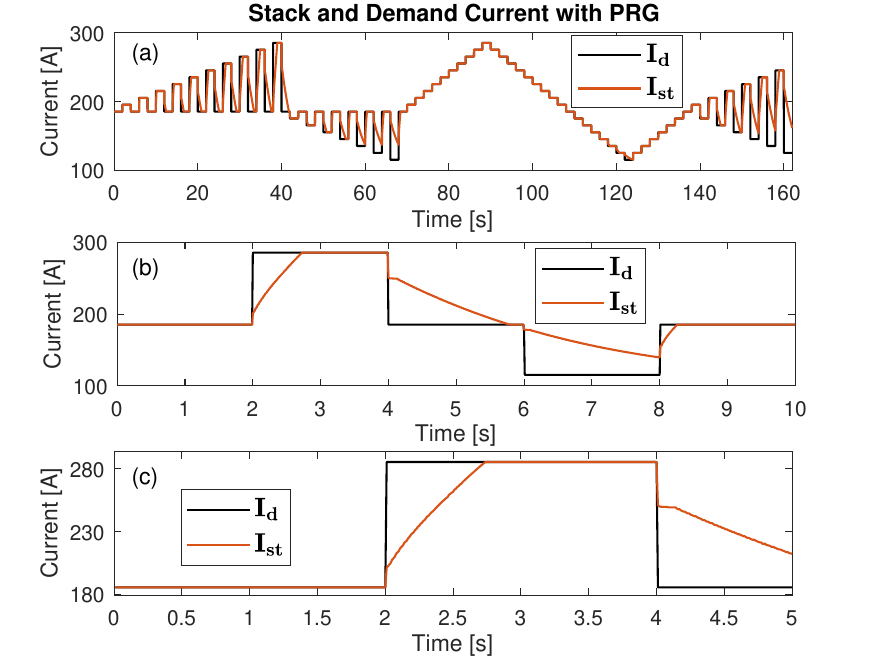}
\caption{Simulation results of PRG on various reference profiles for data collection: (a) comprehensive dataset with $N=16200$ data points, (b) partial dataset with $N=1000$ data points, and (c) limited dataset with $N=500$ data points.}
\label{fig:4}
\end{figure}

\subsection{Data collection, training, and NN-RG for load control}
\label{NN-RG for load control}
In this section, the NN-RG approach described previously in section~\ref{Neural network reference governor (NN-RG)} is employed as the load governor strategy. Recall that in NN-RG, a collection of input and output data from the closed-loop system controlled by the PRG is required in order to train the NN. Here, the inputs to the PRG (and therefore the NN) are the  state $x(t)$ (of both the airpath and the controller), the previous modified reference $v(t-1)=I_{st}(t-1)$, and the desired reference $r\left(t\right)=I_d(t)$. The output of the PRG is the modified reference $v\left(t\right)=I_{st}(t)$. 

To understand the impact of training dataset size, i.e., the number of training samples used for NN training, we use three different datasets that are obtained by simulating the closed-loop system with the PRG (i.e., Algorithm \ref{alg:PRG}). Each of the three simulations uses a different reference profile, as shown in Fig.~\ref{fig:4}. For these simulations, the sampling frequency is chosen as $100$ Hz, and the output prediction horizon is chosen to be $5$ s, the same value used in \cite{sun2005load}. This corresponds to $j^*=500$. The number of iterations in Algorithm \ref{alg:PRG} is set to $L=15$, and $\varepsilon$, the steady-state tightening, is set to 0.05. The datasets can be described as follows: a comprehensive dataset with $N=16200$ data points (as shown in Fig.~\ref{fig:4}a), where the desired reference covers a diverse range of operating conditions; a partial dataset with only $N=1000$ data points (as shown in Fig.~\ref{fig:4}b), where the desired reference covers the entire current range through two large step changes in the current; and a limited dataset with $N=500$ data points (as shown in Fig.~\ref{fig:4}c), where the desired reference covers only a part of the current demand range.

For each NN, we use a feedforward architecture, configured with a single hidden layer (HL) that employs the hyperbolic tangent sigmoid function as its activation function and a linear function in the output layer.  
Training of the NN is performed using the collected data, utilizing the command \ttfamily{fitnet} \normalfont of MATLAB. The dataset is randomly divided into three sets with 70\% of the dataset designated for training, and the remaining data equally split between validation and testing, each comprising 15\%. 
The cost function, the training algorithm, and the stopping criteria of the algorithm are set to the recommended default values of {\ttfamily{fitnet}}. 
Table~\ref{tab:Structure and performance evaluation of the NNs} lists the number of neurons used in the hidden layer of each NN along with the corresponding RMSE between the outputs of the NN and the stack current generated from PRG, confirming the successful training of each NN.

To evaluate the generalization ability of the three neural network models, we simulate the closed-loop system using the PRG with a reference signal not encountered during the NN training. The simulation data (namely, $x(t)$, $r(t)$, and $v(t-1)$) are then used as inputs to the NNs, and their outputs are evaluated and compared against the PRG. The results are presented in Fig.~\ref{fig:6}.
By comparing the NNs' outputs to the stack current generated by the PRG, it can be seen that the NN models trained on the comprehensive, partial, and limited datasets exhibit high, moderate, and low prediction accuracies, respectively.
\begin{figure}
\centering
\includegraphics[width=0.4\textwidth]{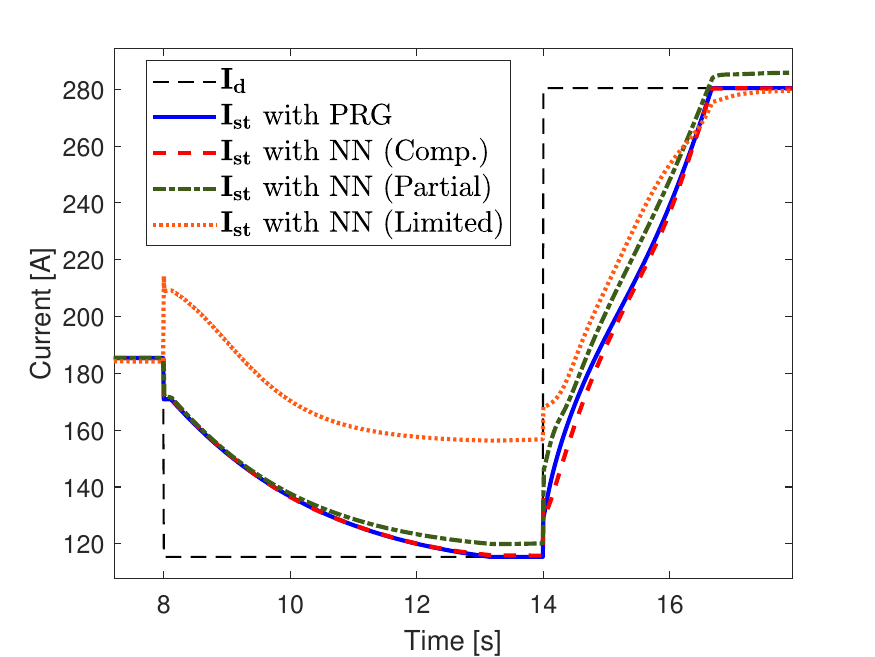}
\caption{Comparing the prediction accuracies of the NNs trained on the comprehensive, partial, and limited datasets.}
\label{fig:6}
\end{figure}
\begin{table}[b!]
\centering
\caption{\\ \uppercase{Structure and Performance Evaluation of The Neural Networks.}}
\small
\begin{tabular}{|l|c|c|}
\hline
\multicolumn{1}{|c|}{Case}                  & HL Neurons & \multicolumn{1}{l|}{RMSE [Amps]} \\ \hline
\multicolumn{1}{|c|}{Comprehensive dataset} & $10$       & $1.0311$                    \\ \hline
Partial dataset                             & $2$        & $2.5232$                    \\ \hline
Limited dataset                             & $5$        & $1.9598$                    \\ \hline
\end{tabular}
\label{tab:Structure and performance evaluation of the NNs}
\begin{tablenotes}
\item \hspace{-1em} \textbf{Note}: The RMSEs are evaluated on the datasets used during the training process, i.e., training, testing, and validation data.
\end{tablenotes}
\end{table}
Before introducing the MNN-RG for load control of the FC in the next section, we initially utilize the trained NNs to enforce the OER and compressor constraints in FC system, as shown by the block diagram in Fig. ~\ref{fig:5}. 
\begin{figure}[b!]
\centering
\includegraphics[width=1.03\columnwidth]{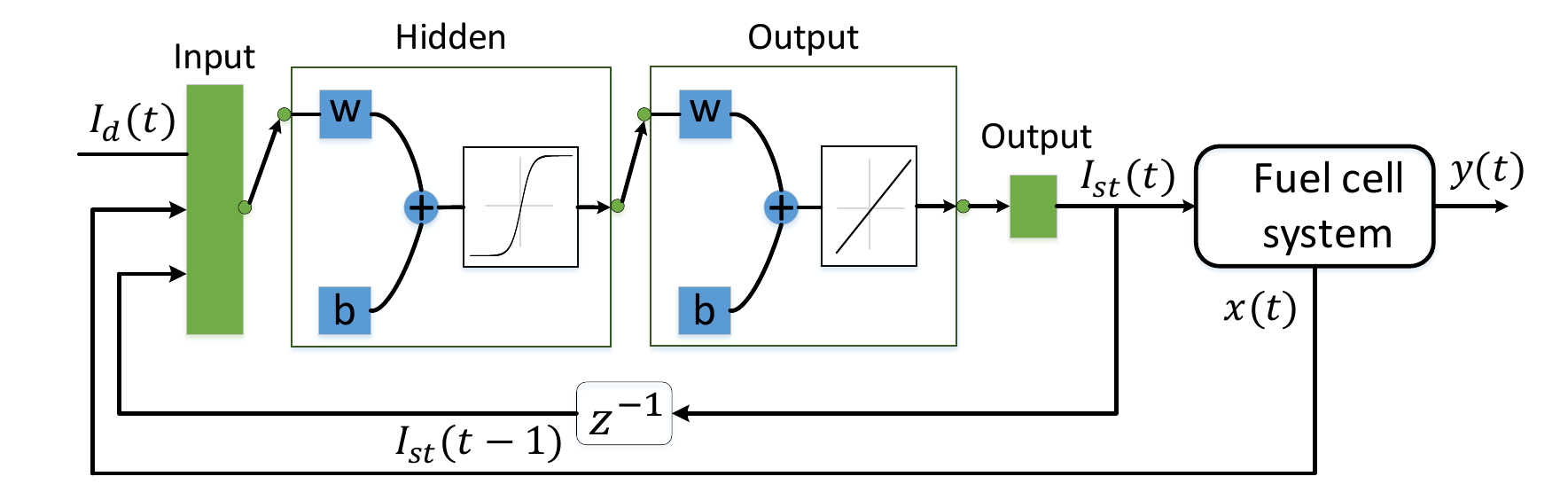}
\caption{Schematics of the NN-RG applied to the FC air-path system.}
\label{fig:5}
\end{figure} 
To this end, a reference profile covering diverse FC system operating conditions is designed. The results are shown in Fig.~\ref{fig:7} where, as expected, the well-generalized NN results in better performance as compared to the NN with poor generalization, and they all result in some level of constraint violation, which we are going to address with MNN-RG next.
\begin{figure*}
    \centering
    \resizebox{\textwidth}{0.25\textheight}{%
        \includegraphics{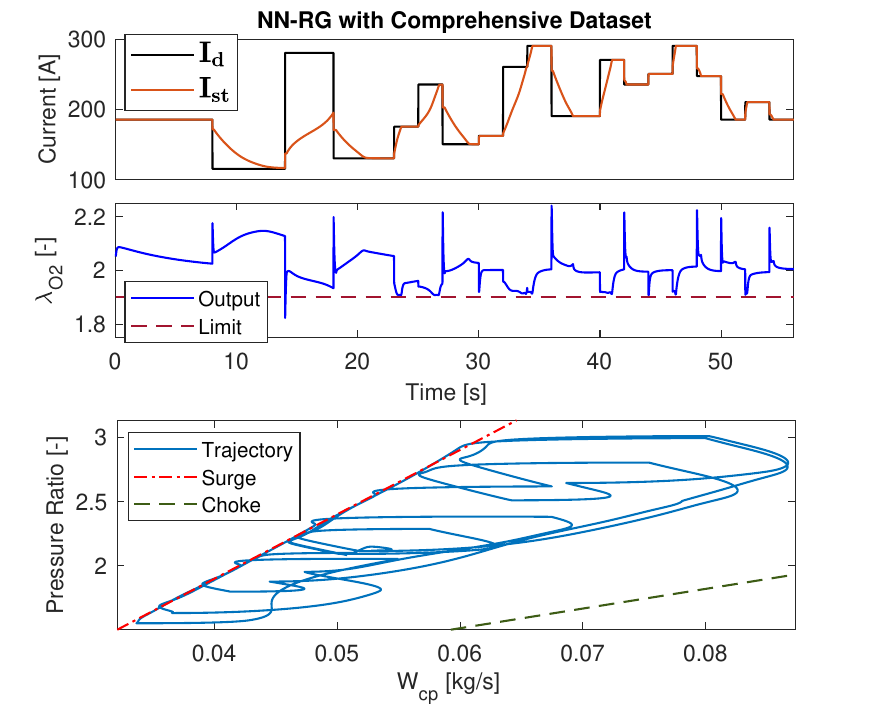}%
        \hspace{-1.3cm}%
        \includegraphics{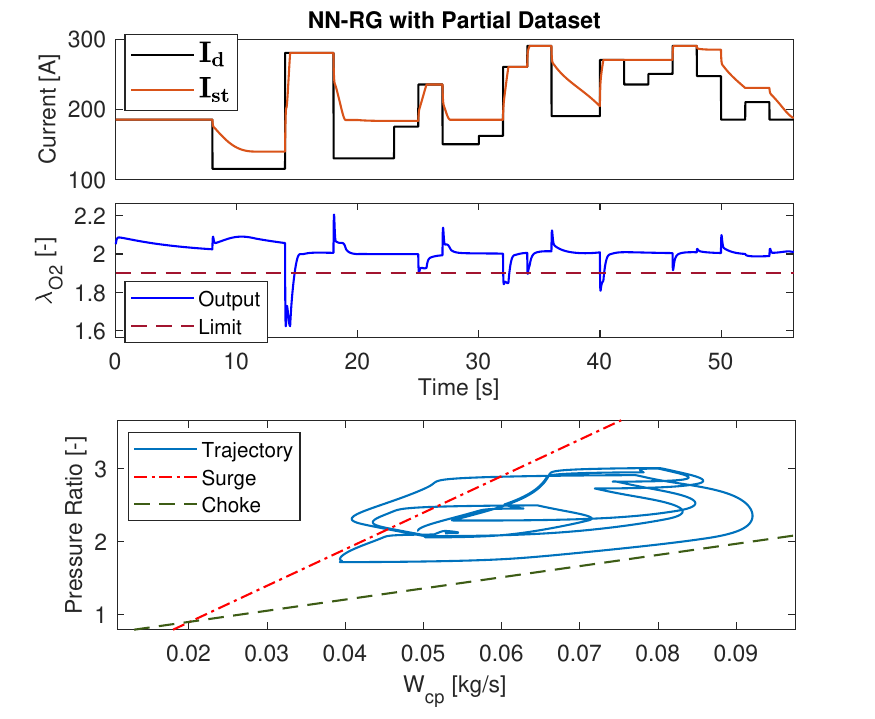}%
        \hspace{-1.3cm}%
        \includegraphics{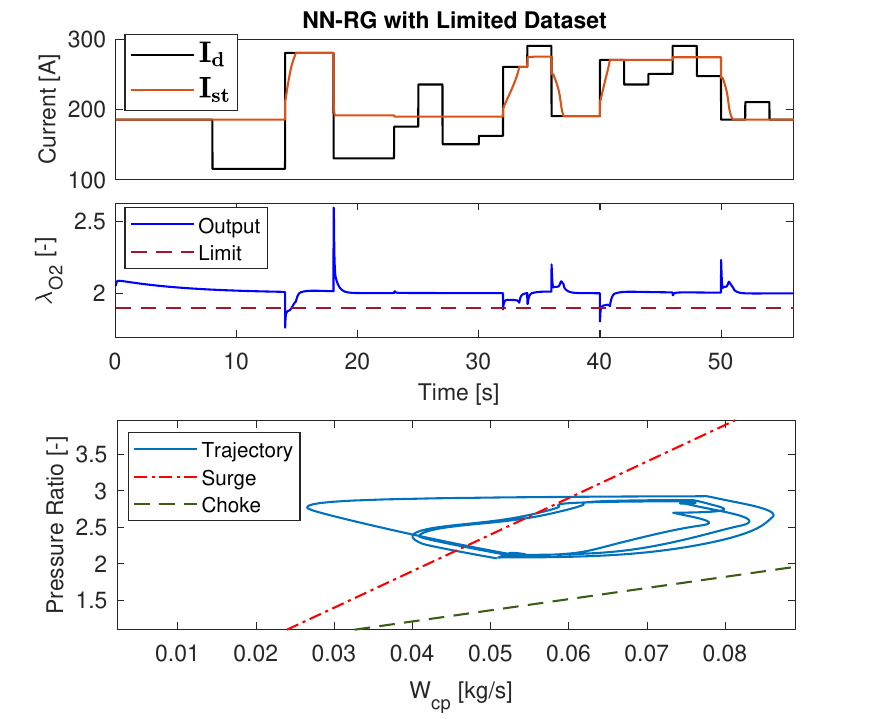}%
    }
    \caption{Performance of the NN-RGs trained on the comprehensive, partial, and limited datasets for a random step series of reference profile.}
    \label{fig:7}
\end{figure*}

\subsection{MNN-RG for load control}
\label{Implementation of the MN-RG}
In this section, the MNN-RG algorithm is employed as load governor for constraint enforcement in FC system. The MNN-RG uses the same trained NNs from the previous subsection.

Since the FC system has multiple constrained outputs, we utilize \eqref{eqn:19} and \eqref{eqn:25} to re-write \eqref{eqn:13} for the OER and surge constraints as follows:
\begin{flalign}
    \begin{split}
        &\frac{M_{1}}{2} (v-v_{n})^{2} - S_{y_{1}}(j)(v-v_{n}) - \hat{y}_{1n}(j) + 1.9 \leq 0 \\
        &\frac{\left(M_{3} + 50p_{\text{atm}} M_{2}\right)}{2}(v-v_{n})^{2} + \left(S_{y_{3}}(j) - 50p_{\text{atm}} S_{y_{2}}(j)\right) \\
        &\quad \times (v-v_{n}) + \hat{y}_{3n}(j) - 50p_{\text{atm}} \hat{y}_{2n}(j) + 0.1p_{\text{atm}} \leq 0
    \end{split}
\label{eqn:26}
\end{flalign}
where $y_1 = {\lambda }_{O_2}, y_2 = W_{cp}$, and $y_3 = p_{sm}$ as before, and $M_{i} \ge 0$, $i=1,2,3$ is a constant for the \(i\)-th output of the FC system such that \( \left| \frac{d^2 y_i}{dv^2} \right| \leq M_i \) for all $x$ and $v$ within the operating range of the FC system. The fuel cell model satisfies the required assumptions in Section~\ref{Problem setup}; therefore, the existence of the $M$ term is ensured in this case. The sensitivities of the constrained outputs with respect to the reference, \( S_{y_i}, i = 1, 2, 3 \), along with the nominal state equations and state sensitivity equations for the air-path system, are also calculated and presented in Appendix B and C. Note that \eqref{eqn:26} excludes the choke constraint, as for the given model, satisfying the OER constraint drives the compressor trajectory away from its choke boundary. Additionally, since the OER constraint is imposed as a lower bound, the first inequality uses the lower bound of the remainder term, as opposed to the upper bound in \eqref{eqn:14}. The final step in MNN-RG involves substituting \eqref{eqn:26} into \eqref{eqn:14} and computing the optimal stack current. However, as discussed in Section~\ref{MNN-RG tuning}, the analytical computation of $M_{i}$ for MNN-RG may be impractical and ineffective. Therefore, in the following section, we explore and compare the two numerical approaches, $\bar{M}$ and $\bar{R}$-tuning, introduced in Section~\ref{M-tuning} and~\ref{bar{R}-tuning}, respectively, for the implementation of the MNN-RG controller.

\subsubsection{Comparison of $\bar{M}$ and $\bar{R}$-tuning methods} For MNN-RG with $\bar{M}$-tuning, the $M_{1}$ in the OER constraint and $M_{3} + 50p_{\text{atm}} M_{2}$ in the surge constraint in \eqref{eqn:26} are respectively substituted with $\bar{M}_{\lambda_{O_{2}}}$ [$\mathrm{A}^{-2}$] and $\bar{M}_{\mathrm{s}}$ [Pa $\mathrm{A}^{-2}$], which are the calibration parameters as explained in Section~\ref{M-tuning}. These parameters are then fine-tuned through Algorithm \ref{alg:M} with $\Delta \bar{M}=1$ for $\bar{M}_s$ and $\Delta \bar{M}=10^{-5}$ for $\bar{M}_{\lambda_{O_2}}$. For the $\bar{R}$-tuning, the quadratic terms of the OER and surge constraints in \eqref{eqn:26} are respectively replaced with $\bar{R}_{\lambda_{O_{2}}}$ [-] and $\bar{R}_{\mathrm{s}}$ [Pa], calculated as follows:
\begin{equation}
\begin{split}
&\bar{R}_{\lambda_{O_{2}}} = \max_{0 \leq t \leq T} \left( \max_{0 \leq j \leq j^*} -{R}_{1v}(t+j|t) \right)\\
&\bar{R}_{s} = \max_{0 \leq t \leq T} \left( \max_{0 \leq j \leq j^*} {R}_{3v}(t+j|t)-50p_{atm}{R}_{2v}(t+j|t) \right)
\end{split}
\label{eqn:27}
\end{equation}
where ${R}_{iv}(t+j|t)$ represents \eqref{eqn:20} for the $i$-th output of the air-path system. The negative sign in $\bar{R}_{\lambda_{O_{2}}}$ equation
is due to the OER constraint being imposed as a lower bound.

To compare the impact of the $\bar{M}$ and $\bar{R}$-tuning approaches on the tracking performance and constraint satisfaction of MNN-RG, we conduct two distinct case studies: one based on NN trained on the comprehensive dataset and one based on the NN trained on the partial dataset. For both cases, the reference trajectory from Fig.~\ref{fig:7} is utilized as the current demand, and the parameters for $\bar{M}$ and $\bar{R}$-tuning are computed as described above, see Table~\ref{tab:the parameters and tuning methods}.
\begin{table}[b]
\centering
\caption{\\ \uppercase{The Parameters in The $\bar{M}$ and $\bar{R}$-Tuning Methods}}
\small
\begin{tabular}{|c|c|c|c|c|}
\hline
\textbf{Case}            & \textbf{$\bar{M}_{\lambda_{O_{2}}}$}              & \textbf{$\bar{M}_{\mathrm{s}}$}  & \textbf{$\bar{R}_{\lambda_{O_{2}}}$}  & \textbf{$\bar{R}_{\mathrm{s}}$}  \\ \hline
Comp. dataset   & $-1.6\times 10^{-4}$ & $26$                        & $0.0014$                              & $2575.9$                          \\ \hline
Partial dataset & $-0.9\times 10^{-4}$ & $26 $                       & $0.026 $                             & $18731 $                          \\ \hline
\end{tabular}
\label{tab:the parameters and tuning methods}
\end{table}
Fig.~\ref{fig:8} and Fig.~\ref{fig:9}  show the closed-loop responses of the OER and compressor trajectory using (a) the $\bar{M}$-tuning and (b) the $\bar{R}$-tuning. To quantify the conservativeness of the two MNN-RGs (namely, with $\bar{M}$ and $\bar{R}$-tuning), the RMSE between stack currents of the PRG and the MNN-RG are calculated and listed in Table~\ref{tab:RMSE Comparison}.
\begin{table}[b]
\centering
\caption{\\ \uppercase{Performance of the MNN-RG with $\bar{M}$ and $\bar{R}$-Tuning}}
\small
\begin{tabular}{|c|c|c|}
\hline
\textbf{Case}              & $\bar{M}$-Tuning & $\bar{R}$-Tuning \\ \hline
Comp. dataset   & $0.055$      & $2.0829$    \\ \hline
Partial dataset & $0.196$      & $14.8473$    \\ \hline
\end{tabular}
\label{tab:RMSE Comparison}
\end{table}
The following points can be concluded from this table and Figs.~\ref{fig:8} and~\ref{fig:9}:
\begin{figure}
    \centering
   \begin{subfigure}{0.45\textwidth}
        \centering        \includegraphics[width=\textwidth]{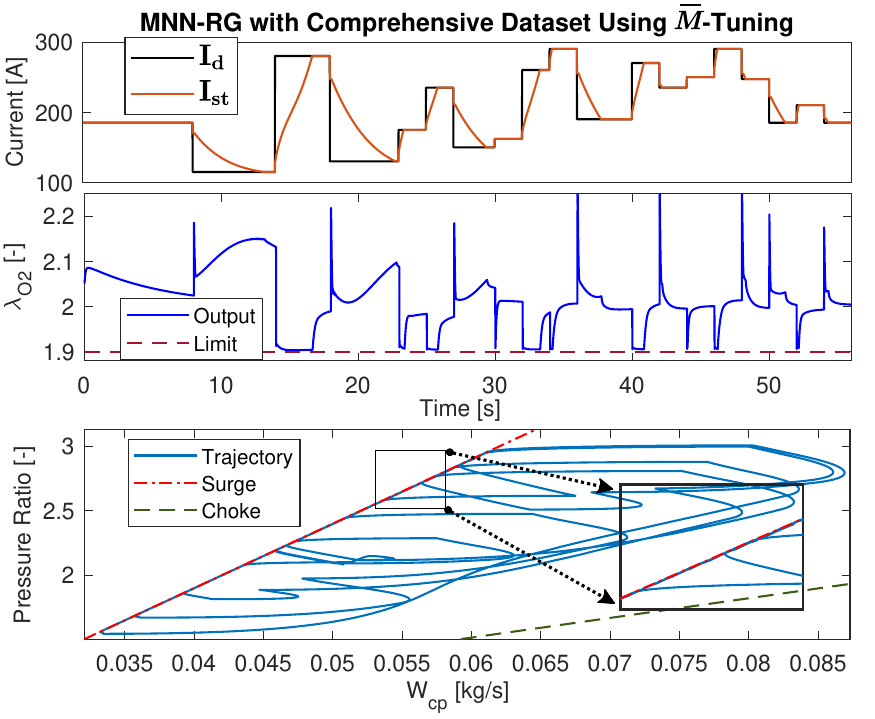}
        %\captionsetup{skip=1pt} % Reduce space between subfigure and caption locally
        \caption{$\bar{M}$-tuning}
    \end{subfigure}
    \begin{subfigure}{0.45\textwidth}
        \centering
        \vspace{1.2em} % Adjusts vertical space between subfigures
    \includegraphics[width=\textwidth]{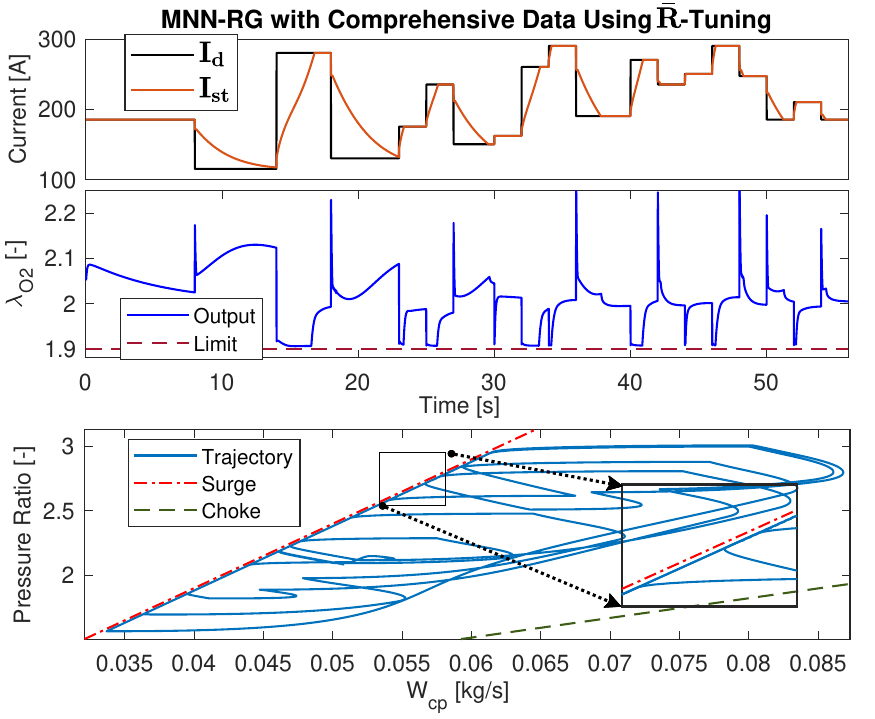}
        %\captionsetup{skip=1pt} % Reduce space between subfigure and caption locally
        \caption{$\bar{R}$-tuning}
    \end{subfigure} 
    \caption{Simulation of MNN-RG trained on the comprehensive dataset}
    \label{fig:8}
\end{figure}
\begin{itemize}
    \item MNN-RG with $\bar{M}$ and $\bar{R}$-tuning satisfy the air-path constraints; however, in both methods, the current response from the MNN-RG trained on the partial dataset is more conservative than the one trained on the comprehensive dataset. This underscores the dependence of the MNN-RG response properties on the generalization capability of the utilized NN. 

    \item The OER and compressor response of the MNN-RG with the $\bar{R}$-tuning approach is more conservative compared to the MNN-RG with the $\bar{M}$-tuning. We can also point out that the tracking performance of the MNN-RG with $\bar{R}$-tuning tends to become more conservative and notably worse compared to the $\bar{M}$-tuning as the generalization of the NN within the MNN-RG degrades. However, as pointed out before, the redeeming feature of $\bar{R}$-tuning is its ease of implementation.
\end{itemize}
\begin{figure}
    \centering
   \begin{subfigure}{0.45\textwidth}
        \centering
        \includegraphics[width=\textwidth]{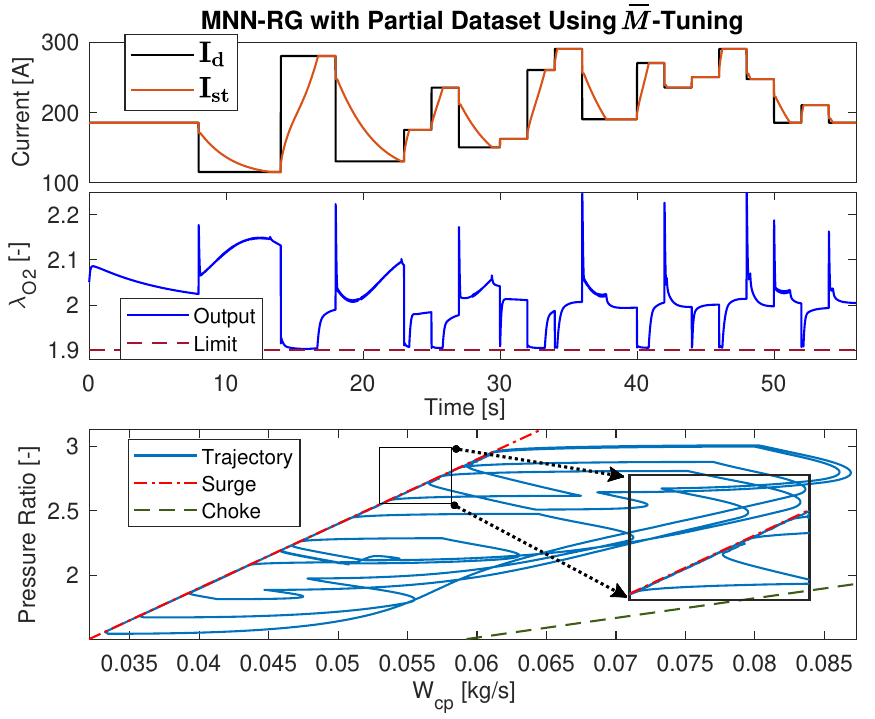}
        %\captionsetup{skip=1pt} % Reduce space between subfigure and caption locally
        \caption{$\bar{M}$-tuning}
    \end{subfigure}
    \begin{subfigure}{0.45\textwidth}
        \centering
        \vspace{1.2em} % Adjusts vertical space between subfigures
        \includegraphics[width=\textwidth]{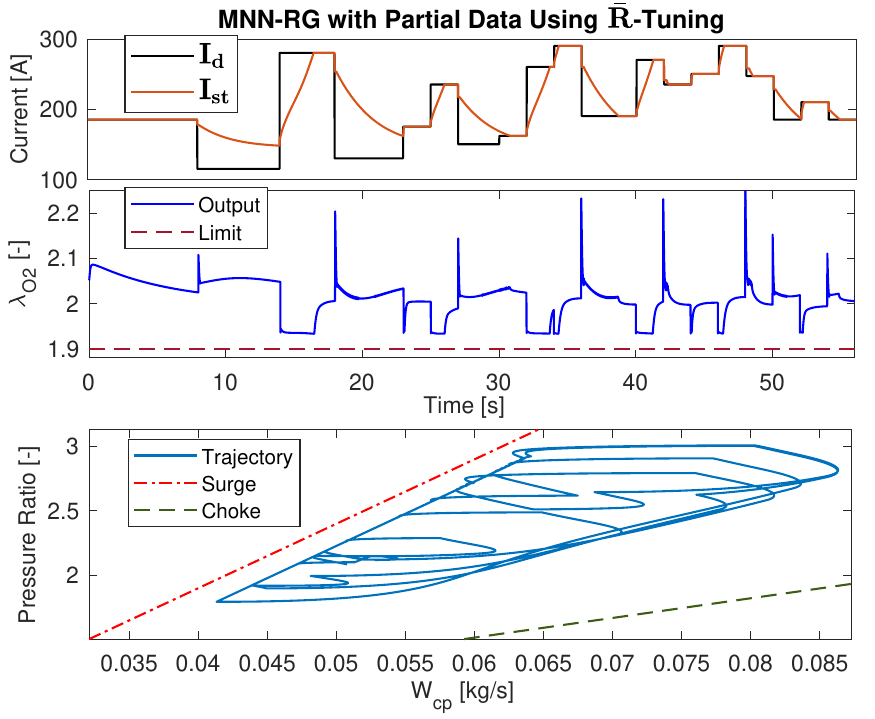}
        %\captionsetup{skip=1pt} % Reduce space between subfigure and caption locally
        \caption{$\bar{R}$-tuning}
    \end{subfigure} 
    \caption{Simulation results of MNN-RG trained on the partial dataset}
    \label{fig:9}
\end{figure}
It is remarkable that the MNN-RG, with only two neurons trained on 1000 data points using the $\bar{M}$-tuning approach, satisfies all the constraints and matches the PRG in load tracking performance, even though the well-generalized NN-RG, trained with 10 neurons on 16200 data points, does not achieve this across the full operating range, as shown in Fig.~\ref{fig:7}(a). One could argue that the good generalization of the NN within the MNN-RG may not be necessary and the MNN-RG with $\bar{M}$-tuning can always achieve a fast tracking response if the $\bar{M}$ parameter is tuned appropriately.  In response to this argument, the following section examines the deficiencies of the MNN-RG in instances where the NN has poor generalization capabilities.

\subsubsection{The impact of poor NN generalization on the MNN-RG performance} 
Following a similar approach as the previous section, here we utilize the poorly generalized NN from Subsection~\ref{NN-RG for load control} for the MNN-RG with $\bar{M}$-tuning. Algorithm~\ref{alg:M} is used to calibrate the parameters in $\bar{M}$-tuning, resulting in $\bar{M}_{\lambda_{O_{2}}}=-0.7 \times 10^{-4}$ and $\bar{M}_{\mathrm{s}}=24$. Fig.~\ref{fig:10} shows the simulation results of the OER and compressor trajectory of the MNN-RG with a poorly generalized NN.
\begin{figure}
\centering
\begin{subfigure}{0.45\textwidth}
        \centering     \includegraphics[width=\textwidth]{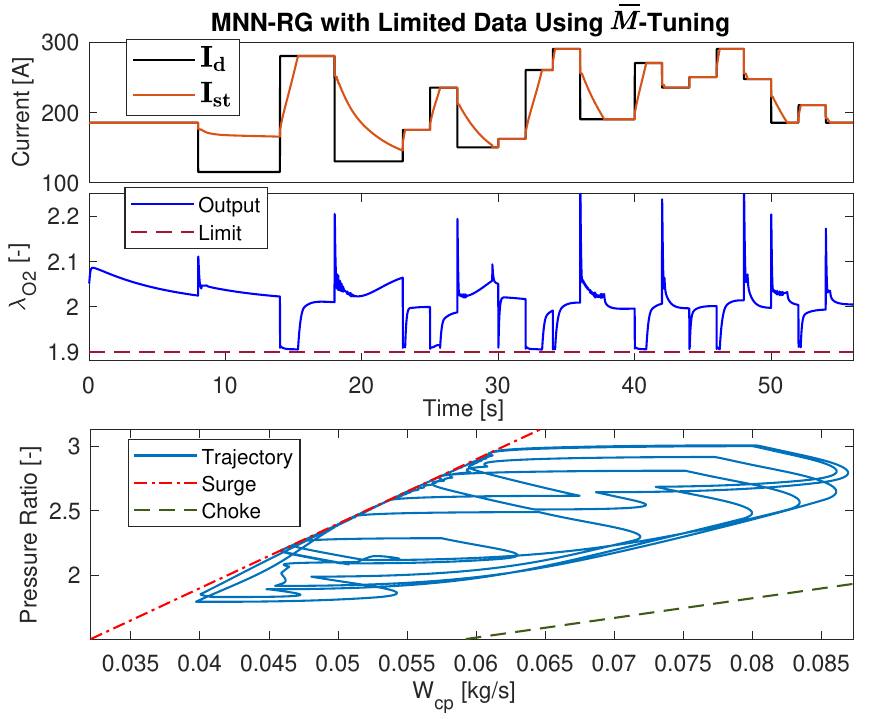}
    \end{subfigure}
\caption{ Simulation result of the MNN-RG trained on the limited dataset using the $\bar{M}$-tuning approach}
\label{fig:10}
\end{figure}
Comparing Figs.~\ref{fig:8}(a) and~\ref{fig:9}(a) against Fig.~\ref{fig:10}, it is evident that for the step current drops at \mbox{$t = 8$ s} and $t = 18$ s, the tracking performance of the MNN-RG with the poorly generalized NN is compromised to satisfy the surge constraint. This is because, as shown in  Fig.~\ref{fig:11}, with $\bar{M}_{\mathrm{s}}=0$, the MNN-RG simultaneously violates the compressor’s surge constraint and fails to maintain current tracking. Algorithm \ref{alg:M} increases $\bar{M}_{\mathrm{s}}$ to meet the surge constraint, which further worsens the current tracking performance.
\begin{figure}
\centering
\includegraphics[width=\columnwidth]{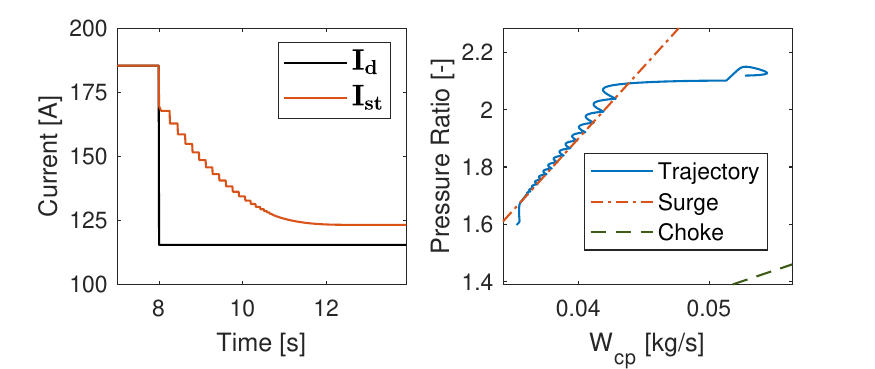}
\caption{ Simulation result of MNN-RG trained on the limited dataset with $\bar{M}_{\mathrm{s}}=0$}
\label{fig:11}
\end{figure}
This shows that despite the fine-tuning of the parameter $\bar{M}$, an MNN-RG with a poorly generalized NN  may still fail to track the desired reference trajectory.

\subsection{Robustness to model uncertainty, noise, and disturbance}
\label{Robustness}
Model uncertainties, noise, and disturbances can deteriorate the performance of model-based controllers. Here, to evaluate the robustness of the proposed controller under practical conditions, the MNN-RG based on a reduced third-order model from the previous section, is applied to a higher-fidelity nine-state full-order representation of the system \cite{pukrushpan2002modeling}. We assume, the state feedback to MNN-RG is corrupted by sensor noise, where the pressure and compressor motor speed noise are modeled as white Gaussian noise with standard deviations of $500$ Pa and $5$ rad/s, respectively. Also, to model uncertainties and disturbances associated with temperature and humidity fluctuations in a real system, the atmospheric relative humidity is modified from $0.5$ to $0.25$, and the temperature of the stack is modified from $80 ^\circ\text{C}$ to $85 ^\circ\text{C}$. 

The simulation results under the above conditions are presented in Fig.~\ref{fig:9th-order}. As observed, under the given conditions, the MNN-RG exhibits a fairly robust performance in tracking the current demand. However, compared to its performance on the nominal model (see Fig.~\ref{fig:9}(a)), the OER violates the constraint by $0.1$ ($5\%$) below the required threshold of $1.9$ during transients, and the compressor trajectory near the surge boundary shows a small excursion into the surge region, indicating slight sensitivity to model mismatch, noise, and disturbances. We emphasize that  applying the PRG to the FC system under the given conditions produce nearly identical results, which are omitted due to space limitations. Thus, the MNN-RG has essentially the same robustness properties as the underlying PRG.

While the performance of the MNN-RG is fairly robust under the given conditions, we acknowledge that constraint violations become more significant as the uncertainties increase. Since the MNN-RG can inherit the robustness properties of the underlying PRG, achieving robustness in constraint satisfaction under such circumstances requires designing a ``robust PRG''. This can be accomplished by incorporating model and measurement uncertainties into the system model (1)-(2) for PRG design  followed by a min-max optimization to handle worst-case scenarios or a scenario-based optimization to improve robustness across multiple uncertainty realizations. However, robust PRG has already been studied in the literature \cite{sun2005load} and is outside the scope of the current paper. 
\begin{figure}
    \centering
    \includegraphics[width=0.94\linewidth]{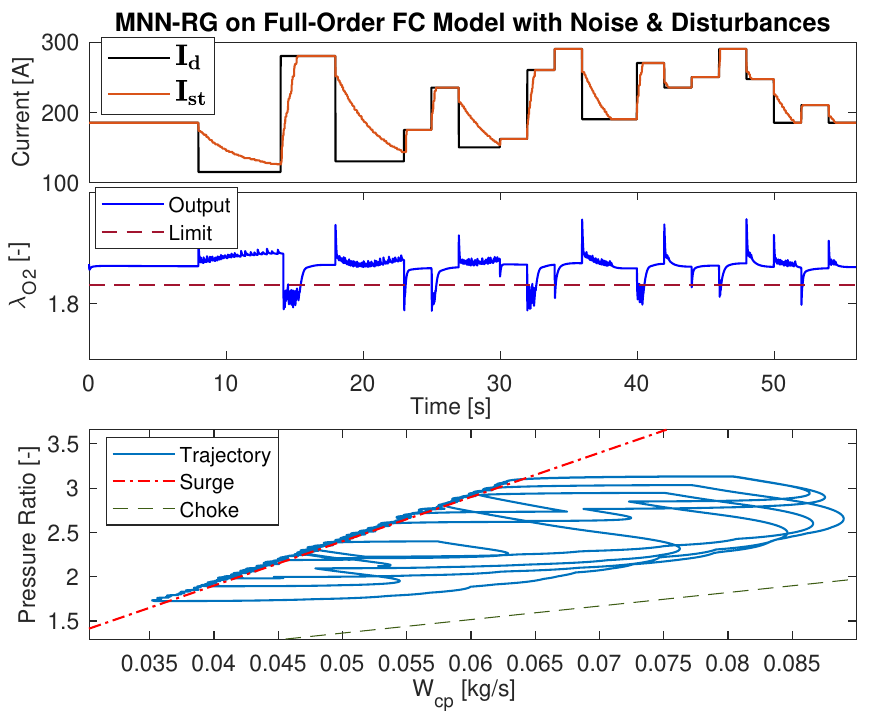}
    \caption{Robustness to model mismatch, noise, and disturbances.}
    \label{fig:9th-order}
\end{figure}  
\subsection{Drive-cycle simulation and computational time analysis}
\label{Drive-cycle simulation}
In this section, we evaluate the performance of the MNN-RG with $\bar{M}$-tuning, trained on the partial data-set, on a realistic drive-cycle. The setpoint used for this validation is representative of the FC system power needed for the completion of a standard drive-cycle, referred to as “Dynamic Test”, as shown Fig.~\ref{fig:12}.
\begin{figure}
\centering 
\begin{subfigure}{0.45\textwidth}
            \includegraphics[width=\textwidth]{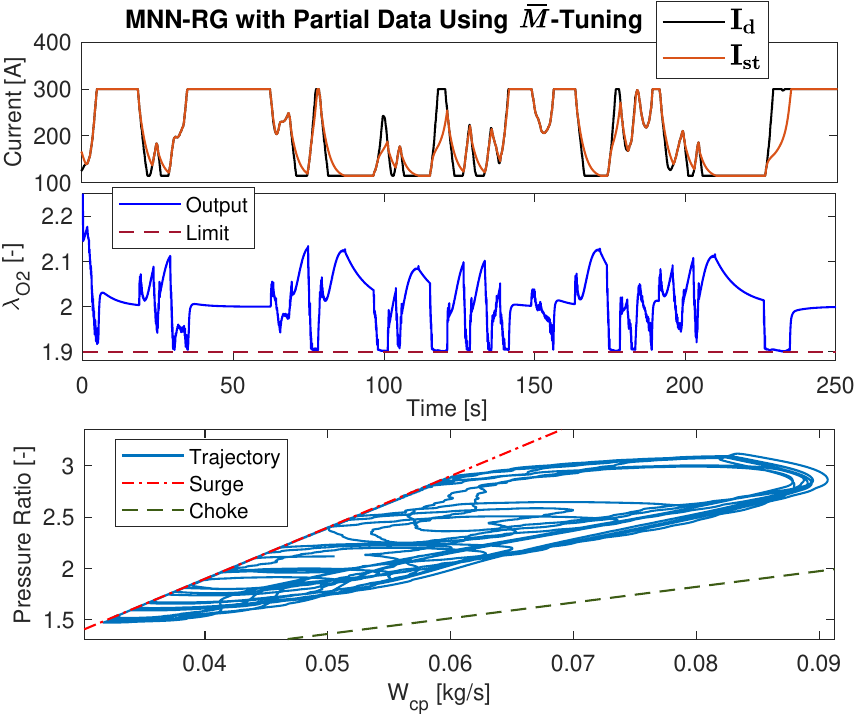}
    \end{subfigure}
\caption{ Drive-cycle simulation with MNN-RG trained on the partial dataset.}
\label{fig:12}
\end{figure}
As observed in Fig.~\ref{fig:12}, the proposed MNN-RG successfully maintains the OER above $1.9$ and avoids compressor surge and choke while pushing the system to its limits. We also simulated the FC system on the same drive-cycle using NN-RG and PRG. For space limitations, detailed simulation plots for the NN-RG and PRG controllers are not shown. However, the key findings are:
\begin{itemize}
    \item The NN-RG, even the well-generalized one trained on the comprehensive dataset, fails to produce a load response that satisfies the constraints at all times, similar to the results presented in Section~\ref{NN-RG for load control}. 
    \item MNN-RG effectively satisfies all the constraints and aligns well with the load response of the PRG, achieving an RMSE of $0.1748$ [A].
\end{itemize}

Additionally, we present a computational comparison between the  PRG, the well-generalized NN-RG trained on the comprehensive dataset, and the proposed MNN-RG trained on the partial dataset. All simulations are conducted in MATLAB on a laptop equipped with a 12\textsuperscript{th} Gen Intel(R) Core(TM) i9-12900H @ 2.90 GHz processor and 16 GB of memory. MATLAB software utilizes a stiff variable-step, variable-order solver \ttfamily{ode15s} \normalfont with a tolerance of $10^{-5}$. Since the MNN-RG optimization problem \eqref{eqn:14} consists of 1000 quadratic inequalities, we opted to solve the optimization problem using a bisectional search with $L=15$, as discussed in Remark~\ref{rm:twowaysolution}. 

For the purpose of timing the computations, the simulation for each method is executed $10$ times to minimize variability from the computer's background processes. 
For each timestep, the minimum computational time across these $10$ simulations is selected. The average execution time, presented in Table~\ref{tab:Performance of different load governors}, is then calculated as the sum of these minimum times divided by the total number of timesteps in the simulation. Additionally, the worst-case execution time, defined as the maximum  among these minimum times across all timestep, is also listed in Table~\ref{tab:Performance of different load governors}. For a controller to be executed in real time, the worst-case execution time must be less than the sampling interval. 
\begin{table}
\centering
\begin{threeparttable}
\caption{\\ \uppercase{Performance of Different Load Governors}}
\footnotesize % Adjust the font size
\begin{tabular}{|c|c|c|c|}
\hline
\multirow{2}{*}{\textbf{Case}} & \multicolumn{1}{c|}{Average} & \multicolumn{1}{c|}{Worst-case} & \multicolumn{1}{c|}{Constraints} \\ 
& \multicolumn{1}{c|}{execution time [ms]} & \multicolumn{1}{c|}{ execution time [ms]} & \multicolumn{1}{c|}{satisfied?} \\ \hline
NN-RG & $1.82$ & $3.1$ & No \\ \hline
PRG & $7.68$ & $22.9$ & Yes \\ \hline
MNN-RG & $4.19$ & $5.8$ & Yes \\ \hline
\end{tabular}
\label{tab:Performance of different load governors}
\begin{tablenotes}
\item \hspace{-1em} \textbf{Note}: The NN-RG and MNN-RG represent the models trained on the comprehensive and partial datasets, respectively.
\end{tablenotes}
\end{threeparttable}
\end{table}
The execution time for MNN-RG includes the time taken for the NN to generate its outputs, the time required to solve the nominal state equations and state sensitivity equations given in \mbox{Appendix B}, as well as the time to solve \eqref{eqn:14} using  the bisectional search. 
According to the results in Table IV, the following points about each method can be concluded:
\begin{itemize}
    \item The PRG method, while satisfying all the constraints, exhibits the highest average and worst case execution times, exceeding the sampling time and indicating real-time execution issues. 
    \item The NN-RG approach has the lowest average and worst-case execution times (the worst case being within the sampling time), but it fails to meet the constraints.
    \item The MNN-RG framework strikes a balance with a moderate average execution time of $4.19$ ms and a worst-case execution time of $5.8$ ms, successfully satisfying the constraints.
\end{itemize}

In MNN-RG, the execution time consistently remains lower than the sampling time. However, in practice, additional computational simplifications may be required, as embedded hardware typically has significantly less processing power compared to a desktop computer. In resource-constrained environments, if MNN-RG is not real-time implementable,  its complexity can be reduced by computing a smaller admissibility index $j^{\ast}$ using the level set of a Lyapunov function, as mentioned in Section~\ref{Problem setup}, and/or it can be run at a slower sample rate, although at the expense of reduced control performance.

The higher computational demand of PRG compared to the proposed MNN-RG stems from the need to simulate the nonlinear dynamics multiple times at each timestep until the bisectional search concludes. In contrast, MNN-RG at each timestep solves the nominal state equation and state sensitivity equations of the air-path system just once, followed by the bisectional search over the quadratic inequalities to determine the optimal gain.
\begin{remark}
While the proposed MNN-RG exhibits a worse-case execution time roughly $4$ times smaller than the PRG for this FC system, its computational advantages of PRG are expected to be even more pronounced for higher-dimensional, more complex systems and for systems requiring larger prediction horizons, such as those with relatively slow dynamics. This is because the simulations in the PRG algorithm (see Step 6 in Algorithm 1) and, thus, the execution time of the PRG as a whole, grows with both the prediction horizon and the system's state dimension.  
\end{remark}

Note that in this case study, both the MNN-RG and PRG employ the bisectional search for the online evaluation of $\kappa$. While increasing the number of iterations, $L$, in the bisectional search yields higher precision in evaluation of $\kappa$, it also leads to an increase in the worst-case execution time of both RGs, as shown in Fig.~\ref{fig:13}. 
\begin{figure}
\centering
\includegraphics[width=0.4\textwidth]{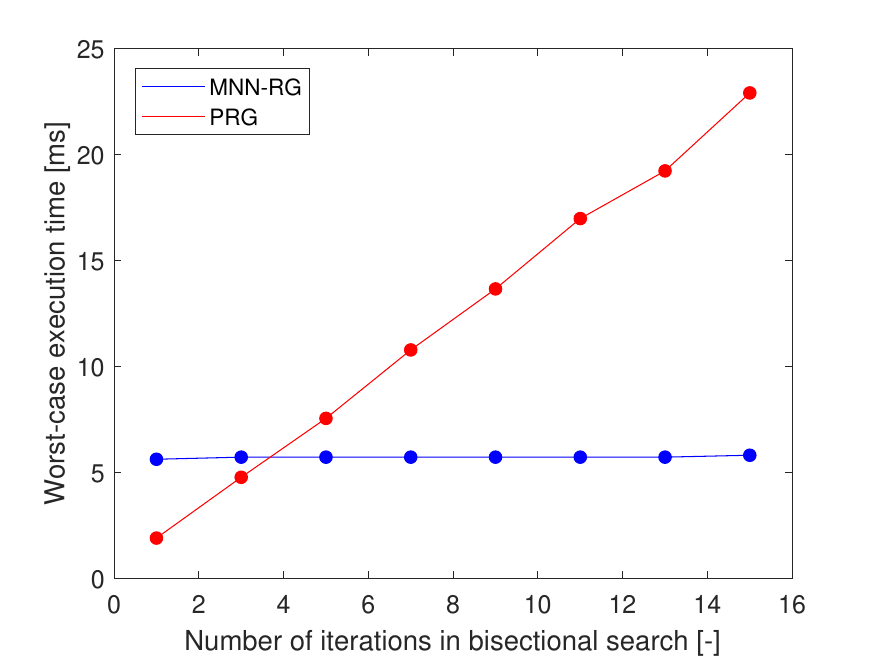}
\caption{Worst-case execution time comparison between PRG and MNN-RG as a function of bisectional search iterations, $L$.}
\label{fig:13}
\end{figure}
The plot reveals two interesting points: Firstly, as $L$ increases, the execution time for the PRG rises linearly (as discussed in Remark 6). In contrast, while MNN-RG execution time also increases, the increase is negligible. This suggests that the computational cost of evaluating the quadratic inequalities in \eqref{eqn:14} is insignificant compared to the operations that are independent of $L$, namely solving the sensitivity and nominal state equations. This makes MNN-RG a preferable choice for real-time control systems where maintaining a predictable and low execution time is essential. Secondly, for values of $L$ below 4, where the precision in $\kappa$ is low, PRG has a shorter execution time, while for larger values of $L$, PRG has a higher execution time. 
This highlights that the MNN-RG is better-suited for higher precision requirements.  
\begin{remark}
It is worth noting that other function approximators, such as Gaussian Process Regression, can be used in place of NN within the MNN-RG framework. The key requirement is to ensure that the chosen model demonstrates acceptable generalization capabilities, as discussed in this section.
\end{remark}
\section{Conclusions and outlook}
In this paper, a computationally-efficient machine learning-based constraint management method for nonlinear systems was developed based on the reference governor (RG) framework. The main objective was to enforce output constraints with minimal computational burden. While the  prediction-based nonlinear RG (PRG) from the literature can enforce constraints on the nonlinear system, it relies on multiple online simulations of the nonlinear dynamics at each sample time, making it computationally demanding. The proposed approach, referred to as the Modified Neural Network RG (MNN-RG), employs a neural network (NN) to approximate the input-output map of the PRG, followed by a modification to guarantee constraint satisfaction despite NN training errors. This modification is achieved using a novel sensitivity-based strategy. 

The proposed approach was then used as a load governor to enforce constraints on oxygen starvation, compressor surge, and compressor choke of an automotive fuel cell system. Through simulation studies, we demonstrated the impacts of NN generalization on the tracking and constraint enforcement capabilities of the MNN-RG. We observed that the MNN-RG improves online execution time compared to the PRG; however, the extent of this improvement is case-dependent.

Future work will focus on improving the robustness of MNN-RG in constraint satisfaction under model mismatch, noise, and disturbances. 
Additionally, experimental validation will be performed on a more complex constrained system with high-dimensional dynamics to evaluate both the real-world applicability and scalability of MNN-RG. The use of advanced machine learning models within MNN-RG will also be explored to enhance its effectiveness. For the fuel cell system, future studies will address additional constraints, including hydration and temperature, to ensure more reliable operation.
\label{conclusions}
\section*{Appendix A \\ proof of Theorem 1}
\label{thm:MNNRG propeties}
To prove constraint enforcement, consider the following: At each timestep, the MNN-RG uses the most recent $x(t)$ and $r(t)$ to solve \eqref{eqn:14}, which results in a \( v(t) \) that satisfies  \( \bar{y}_v \leq -\varepsilon \) and the condition \eqref{eqn:13} over the prediction horizon. Since \eqref{eqn:13} is a tightened version of \( \hat{y}(j) \leq 0 \), satisfaction of \eqref{eqn:13} by \( v(t) \) implies \( \hat{y}(j) \leq 0 \) over the prediction horizon. Thus, \( (x(t), v(t)) \in \Omega \), which, by the finite constraint horizon property of $\Omega$, implies that  \( \hat{y}(j) \leq 0 \) is satisfied for all future times.

To prove the convergence property, assume $r(t)$ remains constant for $t\geq t_0$, i.e., $r(t)=r$ $\forall t\geq t_0$. From the update law of MNN-RG (which is the same as Eq. \eqref{eqn:5}), and with \( \kappa \in [0, 1] \), it follows that \( v(t) \) is a convex combination of \( r(t) \) and \( v(t - 1) \), both of which are bounded. Therefore, $v(t)$ forms a monotonic sequence bounded by $r$, which implies convergence.

ISS stability, as assumed in the theorem, means that the system's state remains bounded for bounded inputs and converges to zero as the input approaches zero. Mathematically, 
 there exist a class \(\mathcal{KL}\) function \(\beta\) and a class \(\mathcal{K}\) function \(\gamma\) such that for any initial state \( x(0)\in\mathbb{R}^{n_x} \) and any bounded input \( v(t) \), the solution \( x(t) \)  exists for all $t \geq 0$ and satisfies $\|x(t)\| \leq \beta(\|x(0)\|, t) + \gamma(|v|_\infty)$
 \begin{flalign}
\|x(t)\| \leq \beta(\|x(0)\|, t) + \gamma(|v|_\infty),
\label{eqn: 0112}
\end{flalign}
where  \(|v|_\infty = \sup \{ |v(t)| : t \in \mathbb{Z}_+ \} < \infty\), see \cite{jiang2001input} for details.
To prove ISS stability in the context of MNN-RG framework, we first assume that both \( v(-1) \)  and \( r(t) \) are bounded. 
Using the update law given in \eqref{eqn:5}, \( v(t) \) remains bounded for all \( t \geq 0 \) (\(|v|_\infty < \infty\)). 
To demonstrate that the ISS property is preserved with respect to \( r(t) \), we note that \(|v|_\infty = \alpha |r|_\infty \) for some \(\alpha \geq 0\). 
Substituting this into \eqref{eqn: 0112} gives \(\|x(t)\| \leq \beta(\|x(0)\|, t) + \gamma(\alpha |r|_\infty) \). Defining \( \gamma'(\cdot): \mathbb{R} \rightarrow \mathbb{R}\) as \(\gamma'(s) = \gamma(\alpha s) \), we have \(\|x(t)\| \leq \beta(\|x(0)\|, t) + \gamma'(|r|_\infty) \). Since \(\gamma' \) is a composition of a \(\mathcal{K}\)-class function \(\gamma \) and a positive scalar multiplication, it remains a \(\mathcal{K}\)-class function, thereby preserving the ISS property with respect to \( r(t) \), which implies ISS stability of the entire system.

\section*{Appendix B}
This appendix provides the nominal state and state sensitivity equations of the third-order model for the air-path of a FC system developed in \cite{talj2009experimental}:
\begin{flalign*}
\begin{aligned}
\dot{\hat{x}}_{1n} &= -\mu_1 \hat{x}_{1n} + \mu_2 \hat{x}_{3n} + \mu_3 - \mu_4 v_n\\
\dot{\hat{x}}_{2n} &= c_{10} \left( g_1 v_n + g_2 + k_p \left( c_{15} v_n - W_{cp,n} \right) + k_i \hat{x}_{4n} \right)\\
& \quad - c_6 \hat{x}_{2n} - \frac{c_{7}}{\hat{x}_{2n}} \left[ \left( \frac{\hat{x}_{3n}}{c_{8}} \right)^{c_{9}} - 1 \right] W_{cp,n}\\
\dot{\hat{x}}_{3n} &= c_{11} \left( 1 + c_{12} \left[ \left( \frac{\hat{x}_{3n}}{c_{8}} \right)^{c_{9}} - 1 \right] \right)\\
& \quad \times \left( W_{cp,n} - c_{14} \left( \hat{x}_{3n} - \hat{x}_{1n} \right) \right)\\
\dot{\hat{x}}_{4n} &= c_{15} v_n - W_{cp,n}\\
\dot{S}_{x1} &= -\mu_1 S_{x1} + \mu_2 S_{x3} - \mu_4\\
\end{aligned}
\end{flalign*}
\begin{flalign*}
\begin{aligned}
\dot{S}_{x2} &= S_{x2} \biggl\{ -c_{10} k_p \frac{\partial W_{cp}}{\partial x_{2}} - c_6 - \frac{c_{7}}{x_{2}} \left[ \left( \frac{x_{3}}{c_{8}} \right)^{c_{9}} - 1 \right] \\& \quad \times \frac{\partial W_{cp}}{\partial x_{2}}+ \frac{c_{7}}{x_{2}^2} \left[ \left( \frac{x_{3}}{c_{8}} \right)^{c_{12}} - 1 \right] W_{cp} \biggr\}_{x=\hat{x}_{n}}\\
& \quad + \biggl\{ -c_{10} k_p \frac{\partial W_{cp}}{\partial x_{3}} - \frac{c_{7}}{x_{2}} \left( \frac{c_{9} x_{3}^{c_{9}-1}}{c_{8}^{c_{9}}} \right) W_{cp}\\
& \quad - \frac{c_{7}}{x_{2}} \left[ \left( \frac{x_{3}}{c_{8}} \right)^{c_{9}} - 1 \right] \frac{\partial W_{cp}}{\partial x_{3}} \biggr\}_{x=\hat{x}_{n}} + c_{10} k_i S_{x4} \\& 
\quad + c_{10} (g_1 + k_p c_{15})\\
\dot{S}_{x3} &= S_{x1} c_{14} c_{11} \left\{ 1 + c_{12} \left[ \left( \frac{x_{3}}{c_{8}} \right)^{c_{9}} - 1 \right] \right\}_{x=\hat{x}_{n}}\\
& \quad + S_{x2} \left( c_{11} \left\{ 1 + c_{12} \left[ \left( \frac{x_{3}}{c_{8}} \right)^{c_{9}} - 1 \right] \right\} \frac{\partial W_{cp}}{\partial x_{2}} \right)_{x=\hat{x}_{n}}\\
& \quad + S_{x3} \left( \frac{c_{11} c_{9} c_{12} x_{3}^{c_{9}-1}}{c_{8}^{c_{9}}} \left( W_{cp} - c_{14} \left( x_{3} - x_{1} \right) \right)\right.\\
& \quad + c_{11} \left\{ 1 + c_{12} \left[ \left( \frac{x_{3}}{c_{8}} \right)^{c_{9}} - 1 \right] \right\} \left. \left( \frac{\partial W_{cp}}{\partial x_{3}} - c_{14} \right)\right)_{x=\hat{x}_{n}}\\
\dot{S}_{x4} &= -\left. \frac{\partial W_{cp}}{\partial x_{2}} \right|_{x=\hat{x}_{n}} S_{x2} - \left. \frac{\partial W_{cp}}{\partial x_{3}} \right|_{x=\hat{x}_{n}} S_{x3} + c_{15}
\end{aligned}
\end{flalign*}
The PI gains are $k_p=100$, $k_i=500$, with the static feedforward control parameters being $g_1=0.6814$ and $g_2=33.8741$. The expressions of parameters $\mu_i$ ($1\le i\le 4$) and $c_i$ ($1\le i\le 17$) are shown in Appendix D. The compressor flow, $W_{cp}$ is obtained by the Jensen-Kristensen model, which relies on nonlinear curve fitting. As such, the partial derivatives of $W_{cp}$ in the above expressions must be evaluated numerically.
\section*{Appendix C}
This appendix provides the sensitivities of the constrained outputs with respect to the reference in the air-path system.
\begin{flalign*}
    \begin{split}
        S_{y_{1}}(j)  &=-\frac{c_{17}}{c_{16} v_n} S_{x1}(j) + \frac{c_{17}}{c_{16} v_{n}} S_{x3}(j) - \frac{c_{17} (\hat{x}_{3n}(j) - \hat{x}_{1n}(j))}{c_{16} v_{n}^{2}} \\ 
        S_{y_{2}}(j) &= \left. \frac{\partial W_{cp}}{\partial x_{2}}(j) \right|_{x=\hat{x}_{n}} S_{x2}(j) + \left. \frac{\partial W_{cp}}{\partial x_{3}}(j) \right|_{x=\hat{x}_{n}} S_{x3}(j) \\
        S_{y_{3}}(j) &= S_{x3}(j)
    \end{split}
\label{eqn:29}
\end{flalign*}
\section*{Appendix D}
In this Appendix, we define the constants \( \mu_i \), \( i = 1, \ldots, 4 \) and \( c_i \), \( i = 1, \ldots, 17 \), that appear in the FC state equation and state sensitivity equations.
\begin{align*}  % The * removes numbering
\chi &= 0.0264,
\mu_{1} = 0.88\left(c_{1} + c_{5} + c_{3}\cdot c_{13}/\chi\right), 
\mu_{2} = c_{1} + c_{5}, \\
\mu_{3} &= 0.88\left(c_{2} c_{3} c_{13}/\chi\right), \mu_{4} = c_{4}, c_{1} = \frac{R_u T_{st} k_{ca,in} x_{O_2,atm}}{M_{O_2} V_{ca}(1 + \omega_{atm})}, \\
c_{2} &= p_{sat}(T_{st}), c_{3} = \frac{R_u T_{st}}{V_{ca}}, c_{4} = \frac{n_{cell} R_u T_{st}}{4 V_{ca} F}, \\
c_{5} &= \frac{R_u T_{st} k_{ca,in}(1 - x_{O_2,atm})}{M_{N_2} V_{ca}(1 + \omega_{atm})}, c_{6} = \frac{\eta_{cm} k_t k_v}{J_{cp} R_{cm}},\\
c_{7} &= \frac{C_p T_{atm}}{J_{cp} \eta_{cp}}, 
c_{8} = p_{atm}, c_{9} = \frac{\gamma_{atm} - 1}{\gamma_{atm}}, c_{10} = \frac{\eta_{cm} k_t}{J_{cp} R_{cm}} ,\\
\end{align*}
\begin{align*}
c_{11} &= \frac{R_u T_{atm}}{M_{a,atm} V_{sm}},
c_{12} = \frac{1}{\eta_{cp}},\\
c_{13} &= \frac{C_{D} A_t}{\sqrt{R_u T_{st}}}{\gamma_{atm}}^{\frac{1}{2}} \left( \frac{2}{\gamma_{atm} + 1} \right)^{\frac{\gamma_{atm} + 1}{2(\gamma_{atm} - 1)}}, \\
c_{14} &= k_{ca,in}, c_{15} = \frac{c_{16} \lambda _{O_{2} ,des}(1 + \omega_{atm})}{x _{O_{2} ,atm}} , c_{16} = \frac{n_{cell} M_{O_2}}{4 F},\\
c_{17} &= k_{ca,in} \frac{x_{O_2,atm}}{1 + \omega_{atm}}
\end{align*}
Please see \cite{pukrushpan2004control} for details and the numerical values of the parameters.

\section*{Acknowledgment}
The authors would like to thank Zeng Qiu, Hao Wang, Chris Weinkauf, and Michiel Van Nieuwstadt from Ford Motor Company, Dearborn, MI, USA, for their
technical comments during the course of this study. We acknowledge funding from Ford Motor Company under a URP Award 002092-URP.

\bibliographystyle{IEEEtran}
\bibliography{conference_101719}

\begin{IEEEbiography}[{\includegraphics[width=1in,height=1.25in,clip,keepaspectratio]{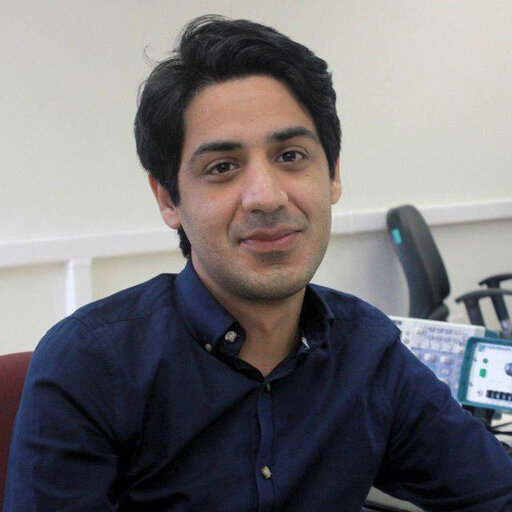}}]{Mostafaali Ayubirad}
(Member, IEEE)
received the M.Sc. degree in electrical engineering (control systems) from the University of Tehran, Tehran, Iran, in 2015. He is currently pursuing the Ph.D. degree in electrical and biomedical engineering with the University of Vermont, Burlington, VT, USA.

His research interests lie in systems and control theory, with a focus on predictive and constrained control, particularly in applications related to automotive fuel cell vehicles.

Mr. Ayubirad's academic excellence has been recognized with the Cyril G. Veinott (Graduate) Award for outstanding academic performance and potential in academia.
\end{IEEEbiography}
\vspace{-13.5cm}
\begin{IEEEbiography}[{\includegraphics[width=1in,height=1.25in,clip,keepaspectratio]{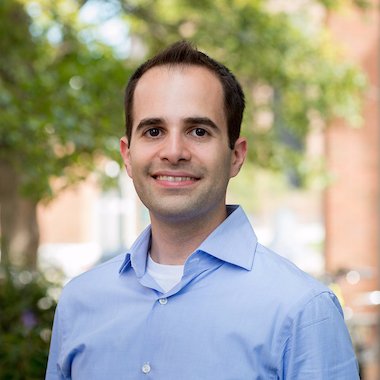}}]{Hamid R. Ossareh}
(Senior Member, IEEE) received the B.A.Sc. degree from the University of Toronto, Toronto, ON, Canada, in 2008, and the Ph.D. degree from the University of Michigan, Ann Arbor, MI, USA, in 2013.

From 2013 to 2016, he was with Ford Research and Advanced Engineering, Dearborn, MI, USA, as a Research Engineer. Since 2016, he has been a Faculty Member with the University of Vermont (UVM), Burlington, VT, USA. He is currently an Associate Professor. He holds several patents. His research interests include the areas of systems and control theory, more specifically predictive control, nonlinear control, and constrained control, with application areas of automotive, power, aerospace, and xerographic systems.

Dr. Ossareh has won several awards, including the Faculty of the Year Award from IEEE Green Mountain Section, the Excellence in Research Award from UVM College of Engineering and Mathematical Sciences, and the Ford Technical Achievement Award from Ford Motor Company.
\end{IEEEbiography}

\end{document}